\title{A Dichotomy Theorem for Linear Time Homomorphism Orbit Counting in Bounded
	Degeneracy Graphs} 
\titlerunning{Homomorphism Orbit Counting in Bounded
	Degeneracy Graphs} 
\author{Daniel Paul-Pena}{University of California, Santa Cruz, United States }{dpaulpen@ucsc.edu}{https://orcid.org/0009-0008-1073-6173}{}
\author{C. Seshadhri}{University of California, Santa Cruz, United States}{sesh@ucsc.edu}{https://orcid.org/0000-0003-2163-3555}{}
\authorrunning{D. Paul-Pena and C. Seshadhri} 
\keywords{Homomorphism counting, Bounded degeneracy graphs, Fine-grained complexity, Orbit counting, Subgraph counting} 
\newcommand{\ignore}[1]{}
\newcommand{\poly}{\mathrm{poly}}
\newcommand{\eqdef}{:=}
\newcommand{\Eqn}[1]{\hyperref[eq:#1]{(\ref*{eq:#1})}} 
\newcommand{\Fig}[1]{{Fig.\,\ref{fig:#1}}} 
\newcommand{\Tab}[1]{\hyperref[tab:#1]{Tab.\,\ref*{tab:#1}}} 
\newcommand{\Thm}[1]{\hyperref[thm:#1]{Theorem\,\ref*{thm:#1}}} 
\newcommand{\Fact}[1]{\hyperref[fact:#1]{Fact\,\ref*{fact:#1}}} 
\newcommand{\Lem}[1]{\hyperref[lem:#1]{Lemma\,\ref*{lem:#1}}} 
\newcommand{\Prop}[1]{\hyperref[prop:#1]{Prop.~\ref*{prop:#1}}} 
\newcommand{\Cor}[1]{\hyperref[cor:#1]{Corollary~\ref*{cor:#1}}} 
\newcommand{\Conj}[1]{\hyperref[conj:#1]{Conjecture~\ref*{conj:#1}}} 
\newcommand{\Def}[1]{\hyperref[def:#1]{Definition~\ref*{def:#1}}} 
\newcommand{\Alg}[1]{\hyperref[alg:#1]{Alg.~\ref*{alg:#1}}} 
\newcommand{\Clm}[1]{\hyperref[clm:#1]{Claim~\ref*{clm:#1}}} 
\newcommand{\Obs}[1]{\hyperref[obs:#1]{Observation~\ref*{obs:#1}}} 
\newcommand{\Rem}[1]{\hyperref[rem:#1]{Remark~\ref*{rem:#1}}} 
\newcommand{\Con}[1]{\hyperref[con:#1]{Construction~\ref*{con:#1}}} 
\newcommand{\Step}[1]{\hyperref[step:#1]{Step~\ref*{step:#1}}} 
\newcommand{\Assumption}[1]{\hyperref[assm:#1]{Assumption\,\ref*{assm:#1}}} 
\newcommand{\degen}{\kappa}
\newcommand{\tw}{t}
\newcommand{\dtw}{\tau}
\newcommand{\TRICONJ}{Triangle Detection Conjecture}
\newcommand{\LIPCO}{LIPCO}
\newcommand{\Hom}[2]{\mathrm{Hom}_{#2}(#1)}
\newcommand{\HomAl}[1]{\mathrm{Hom}_{#1}}
\newcommand{\Aut}{\mathrm{Aut}}
\newcommand{\ext}{\mathrm{ext}}
\newcommand{\down}{\mathrm{down}}
\newcommand{\OHC}{\mathrm{OrbitHom}}
\newcommand{\VHC}{\mathrm{VertexHom}}
\newcommand{\ComputeVHC}{ComputeVertexCounts}
\newcommand{\HS}{H_S}
\newcommand{\Signature}{\text{Sig}}
\newcommand{\Reachable}{Reach}
\newcommand{\IS}{\mathcal{I}\mathcal{S}}
\newcommand{\Agg}{Agg}
\newcommand{\dirG}{G^\to}
\begin{document}

\maketitle

\begin{abstract}
Counting the number of homomorphisms of a pattern graph $H$ in a large input graph $G$ is a fundamental problem in computer science. 
In many applications in databases, bioinformatics, and network science, we need more than just the total count. 
We wish to compute, for each vertex $v$ of $G$, the number of $H$-homomorphisms that $v$ participates in. This problem
is referred to as \emph{homomorphism orbit counting}, as it relates to the orbits of vertices of $H$ under its automorphisms.

Given the need for fast algorithms for this problem, we study when near-linear time algorithms are possible. A natural restriction is to
assume that the input graph $G$ has bounded degeneracy, a commonly observed property in modern massive networks. Can we characterize
the patterns $H$ for which homomorphism orbit counting can be done in near-linear time?

We discover a dichotomy theorem that resolves this problem. For pattern $H$, let $\ell$ be the length of the longest
induced path between any two vertices of the same orbit (under the automorphisms of $H$).
If $\ell \leq 5$, then $H$-homomorphism orbit counting can be done in near-linear time for
bounded degeneracy graphs. If $\ell > 5$, then (assuming fine-grained complexity conjectures)
there is no near-linear time algorithm for this problem. We build on existing work on dichotomy theorems
for counting the total $H$-homomorphism count.
Surprisingly, there exist (and we characterize)
patterns $H$ for which the total homomorphism count can be computed in near-linear time, but the
corresponding orbit counting problem cannot be done in near-linear time.
\end{abstract}

\section{Introduction} \label{sec:intro}

Analyzing the occurrences of a small pattern graph $H$ in a large input graph $G$
is a central problem in computer science. The theoretical study has led to
a rich and immensely deep theory~\cite{Lo67,ChNi85,FlGr04,DaJo04,Lo12,AhNeRo+15,CuDeMa17,PiSeVi17,RoWe20, BrRo22, BrGoMe+23}.
The applications of graph pattern counts occur across numerous scientific areas,
including logic, biology, statistical physics, database theory, social sciences, machine learning, and network science~\cite{HoLe70,ChMe77,Co88,BrWi99,DrGr00,BoChLo+06,Fa07,OcHaCa+12,UgBaKl13,PiSeVi17,DeRoWe19,PaSe20}. (Refer to the tutorial~\cite{SeTi19} for more details on applications.)

A common formalism used for graph pattern counting is \emph{homomorphism counting}. The pattern graph is denoted $H = (V(H), E(H))$ and is assumed
to have constant size. The input graph is denoted $G = (V(G), E(G))$. Both graphs are simple and do not contain self-loops. An $H$-homomorphism is a map $f:V(H) \to V(G)$ that preserves edges. Formally, $\forall (u,v) \in E(H)$, $(f(u), f(v)) \in E(G)$. Let $\Hom{G}{H}$ denote the number of distinct $H$-homomorphisms in $G$.

Given the importance of graph homomorphism counts, the study of efficient algorithms for this problem
is a subfield in itself~\cite{ItRo78,AlYuZw97,BrWi99,DrGr00,DiSeTh02,DaJo04,BoChLo+06,CuDeMa17,Br19,RoWe20}. The simplest version of this problem is when $H$ is a triangle, itself a problem that attracts much attention.
Let $n = |V(G)|$ and $k = |V(H)|$. 
Computing $\Hom{G}{H}$ is $\#W[1]$-hard when parameterized by $k$ (even when $H$ is a $k$-clique), so we do not expect $n^{o(k)}$ algorithms for general $H$~\cite{DaJo04}. 
Much of the algorithmic study of homomorphism counting is in understanding conditions on $H$ and $G$ when the trivial $n^k$ running time bound can be beaten.

Our work is inspired by the challenges of modern applications of homomorphism counting, especially in network science.
Typically, $n$ is extremely large, and only near-linear time ($n\cdot \poly(\log{n})$) algorithms  are feasible.
Inspired by a long history and recent theory on this topic,
we focus on \emph{bounded degeneracy} input graphs (we say bounded degeneracy graphs to refer to graphs belonging to classes of graphs with bounded degeneracy). This includes all non-trivial minor-closed
graph families, such as planar graphs, bounded genus graphs, and bounded tree-width graphs.
Many practical algorithms for large-scale graph pattern counting use algorithms for bounded
degeneracy graphs~\cite{AhNeRo+15,JhSePi15,PiSeVi17,OrBr17,JaSe17,PaSe20}. 
Real-world graphs typically have a small degeneracy, comparable to their average degree (\cite{GoGu06,JaSe17,ShElFa18,BeChGh20,BeSe20}, also Table 2 in~\cite{BeChGh20}). 

Secondly, many modern applications for homomorphism counting
require more fine-grained statistics than just the global count $\Hom{G}{H}$. 
The aim is to find, \emph{for every vertex $v$ of $G$},
the number of homomorphisms that $v$ participates in. 
Seminal work in network analysis for bioinformatics plots the
distributions of these per-vertex counts to compare graphs~\cite{ItLeKa+05,Pr07}. 
Orbit counts can be used to generate features for vertices,
sometimes called the graphlet kernel~\cite{ShViPe+09}. 
In the past few years, there have been many applications of these per-vertex counts~\cite{BeHeLa+11,UgBaKl13,SaSePi+15,Ts15,BeGlLe16,TsPaMi17,RoKaKl+17,YiBiLe18,YiBiLe19}.

Algorithms for this problem require considering the ``roles'' that $v$ could play in a homomorphism.
For example, in a $7$-path (a path of length $6$) there are $4$ different roles: a vertex $v$ could be in the middle, 
could be at the end, or at two other positions. These roles are colored in \Fig{graphs_loipl}. 
The roles are called \emph{orbits} (defined in the Section \ref{sec:preliminaries}), and the problem 
of \emph{$H$-homomorphism orbit counting} is as follows:
for every orbit $\psi$ in $H$ and every vertex $v$ in $G$, output the number of homomorphisms
of $H$ where $v$ participates in the orbit $\psi$.
This is the main question addressed by our work:

\medskip

\emph{What are the pattern graphs $H$ for which the $H$-homomorphism orbit counting
	problem is computable in near-linear time (when $G$ has bounded degeneracy)?}

\medskip

Recent work of Bressan followed by Bera-Pashanasangi-Seshadhri introduced the question
of homomorphism counting for bounded degeneracy graphs, from a fine-grained complexity perspective \cite{Br19,BePaSe21}. 
A dichotomy theorem for near-linear time counting of $\Hom{G}{H}$ was provided in subsequent work \cite{BeGiLe+22}. Assuming fine-grained
complexity conjectures, $\Hom{G}{H}$ can be computed in near-linear time iff the longest
induced cycle of $H$ has length at most $5$. It is natural to ask whether
these results extend to orbit counting.

\begin{figure}
	\centering
	\includegraphics[width=\linewidth*2/3]{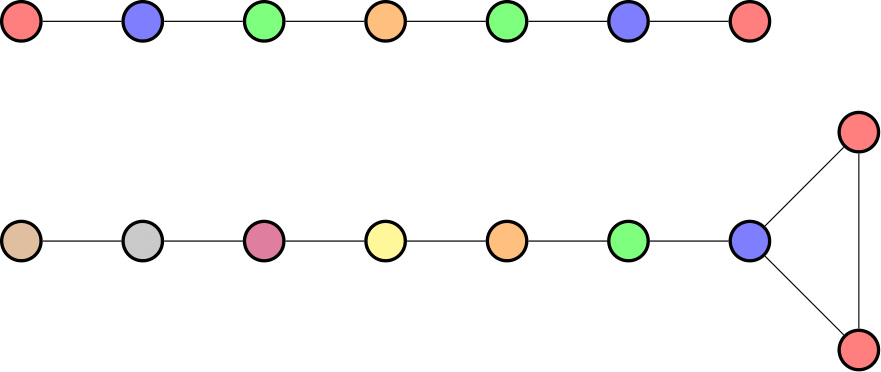}
	\caption{Examples of orbits and LIPCO values. Vertices in the same orbit have the same color. 
		The top graph is the 7-path (a path of length 6). There is an induced path of length 6 between the red vertices, hence the \LIPCO{} of this graph is 6.
		\Thm{main} implies that we can not compute $\OHC$ in near-linear time.
		\\The bottom graph adds a triangle at the end, breaking the symmetry, and the only vertices in the same orbit in that graph are the red ones. The \LIPCO{} in this graph is now less than 6 so we can compute $\OHC$
		in near-linear time.}
	\label{fig:graphs_loipl}
\end{figure}

\subsection{Main Result} \label{sec:result}

We begin with some preliminaries.
The input graph $G = (V(G), E(G))$ has $n$ vertices and $m$ edges. 
A central notion in our work is that of graph degeneracy, also called
the coloring number.

\begin{definition} \label{def:degen} A graph $G$ is $\degen$-degenerate if the minimum degree in every subgraph of $G$
	is at most $\degen$.
	
	The degeneracy of $G$ is the minimum value of $\degen$
	such that $G$ is $\degen$-degenerate.
\end{definition}

A family of graphs has \emph{bounded
	degeneracy} if the degeneracy is constant with respect to the graph size. 
Bounded degeneracy graph classes are extremely rich. For example,
all non-trivial minor-closed families have bounded degeneracy. This includes
bounded treewidth graphs. Preferential
attachment graphs also have bounded degeneracy; real-world graphs
have a small value of degeneracy (often in the 10s) with respect
to their size (often in the hundreds of millions)~\cite{BeChGh20}.

We assume the pattern graph $H=(V(H), E(H))$ to have
a constant number of vertices. (So we suppress any dependencies on purely the size of $H$.) Consider the group of automorphisms
of $H$. The vertices of $H$ can be partitioned into \emph{orbits},
which consist of vertices that can be mapped to each other by some 
automorphism (defined formally in Definition \ref{def:orbits}). For example, in \Fig{graphs_loipl}, the $7$-path has four different
orbits, where each orbit has the same color. The $7$-path with a hanging triangle (in \Fig{graphs_loipl})
has more orbits, since the pattern is no longer symmetric with respect to the ``center'' of the $7$-path and hence the opposite ``ends'' of the $7$-path cannot be mapped
by a non-trivial automorphism.

The set of orbits of the pattern $H$ is denoted $\Psi(H)$. Let $\Phi(H,G)$ be the set of homomorphisms from $H$ to $G$ ($\Hom{G}{H}=|\Phi(H,G)|$). We now define our main problem.

\begin{definition} \label{def:ohc}
	{Homomorphism Orbit Counts}: For each orbit $\psi \in \Psi(H)$ and vertex $v \in V(G)$, 
	define $\OHC_{H,\psi}(v)$ to be the number of $H$-homomorphisms mapping a vertex of $\psi$ to $v$.
	Formally, $\OHC_{H,\psi}(v) = |\{\phi \in \Phi(H,G) \colon \exists h\in \psi, \, \phi(h)=v\}|$.
	
	The problem of \emph{$H$-homomorphism orbit counting} is to output
	the values $\OHC_{H,\psi}(v)$ for all $v \in V(G), \psi \in \Psi(H)$. (Abusing
	notation, ${\OHC}_{{H}}({G})$ refers to the list/vector of all of these values.)
\end{definition}

Note that for a given $H$, the size of the output is $n|\Psi(H)|$ (recall $n = |V(G)|$). For example,
when $H$ is the $7$-path, we will get $4n$ counts, for each vertex and each of the four orbits.

Our main result is a dichotomy theorem that precisely characterizes patterns $H$
for which $\OHC_{H}(G)$ can be computed in near-linear time. 
We introduce a key definition.

\begin{definition} \label{def:lipo} For a pattern $H$, the \emph{Longest Induced Path
		Connecting Orbits} of $H$, denoted $\LIPCO(H)$ is defined as follows.
	It is the length of the longest induced simple path, measured in edges, between any two vertices $h, h'$ in $H$ 
	(where $h$ may be equal to $h'$, forming a cycle) in the same orbit.
\end{definition}

Again refer to \Fig{graphs_loipl}. The $7$-path has a \LIPCO{} of six, since
the ends are in the same orbit. On the other hand, the second pattern
($7$-path with a triangle) has a \LIPCO{} of $3$ due to the triangle.

The Triangle Detection Conjecture was introduced by Abboud and Williams on the complexity of determining whether a graph has a triangle~\cite{AbWi14}. 
It is believed that this problem cannot be solved in near-linear time, and indeed,
may even require $\Omega(m^{4/3})$ time. We use this conjecture for the lower bound of our main theorem.

\begin{conjecture}[Triangle Detection Conjecture~\cite{AbWi14}]
	\label{conj:triangle}
	There exists a constant $\gamma>0$ such that in the word RAM model
	of $O(\log n)$ bits, any algorithm to detect whether an input graph on
	$m$ edges has a triangle requires $\Omega(m^{1+\gamma})$ time
	in expectation.
\end{conjecture}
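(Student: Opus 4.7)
The final statement is \Conj{triangle}, the Triangle Detection Conjecture, and it is presented as a conjecture rather than a theorem: no proof is known, and none is attempted here. The paper follows the standard practice in fine-grained complexity---initiated by Abboud and Williams~\cite{AbWi14} and used throughout the subsequent literature---of adopting this conjecture as a hypothesis on the lower-bound side of a dichotomy. Accordingly, in place of a proof one cites \cite{AbWi14} and appeals to the body of evidence supporting the conjecture, which is all that is needed for the conjecture to be usable as a black box in the reductions that establish the hard direction of the main dichotomy later in the paper.

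The justification one would organize, were one to argue plausibility rather than prove the statement, has three strands. First, the best known triangle detection algorithms are far from near-linear: the Alon--Yuster--Zwick algorithm runs in $O(m^{2\omega/(\omega+1)})$ time using fast matrix multiplication, and the best purely combinatorial bound is $O(m^{3/2})$ due to Itai--Rodeh; despite decades of effort, no $m^{1+o(1)}$ algorithm is known in either regime. Second, fine-grained reductions relate triangle detection to Boolean matrix multiplication, all-pairs shortest paths, and related core problems, so a near-linear triangle algorithm would simultaneously refute multiple widely believed hardness hypotheses. Third, the specific exponent $1+\gamma$ in the statement (and the $\Omega(m^{4/3})$ barrier alluded to in the surrounding paragraph) is consistent with the conjectured combinatorial complexity of listing and detecting triangles in sparse graphs.

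The genuine obstacle to a real proof is that proving this conjecture would require an unconditional super-linear time lower bound, in the word-RAM model of $O(\log n)$ bits, for a concrete problem in $P$. No such lower bound is known for any natural problem in $P$, and the standard complexity-theoretic barriers (relativization, natural proofs, algebrization) are all either applicable or unaddressed by current techniques. This is precisely why the statement is recorded as a conjecture: the paper does not and cannot establish it, and its role in what follows is only as a hypothesis from which lower bounds on orbit counting are reduced.
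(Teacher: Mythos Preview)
Your assessment is correct and matches the paper exactly: \Conj{triangle} is stated as an unproven conjecture from~\cite{AbWi14}, with no proof attempted, and is used purely as a hypothesis for the lower-bound direction of the dichotomy. The paper offers no argument beyond citing~\cite{AbWi14} and briefly noting the history of triangle algorithms in the related-work section, so your additional discussion of plausibility and barriers goes somewhat beyond what the paper itself says but is entirely appropriate context.
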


Our main theorem proves that the \LIPCO{} determines the dichotomy. Note that because $G$ is a bounded degeneracy graph we have $m=O(n)$, we will be expressing the bounds in terms of $m$.

\begin{theorem} \label{thm:main}
	{(Main Theorem)} Let $G$ be a graph with $n$ vertices, $m$ edges, and bounded degeneracy. Let $\gamma > 0$ denote the constant from the Triangle Detection Conjecture (\Conj{triangle}).
	\begin{itemize}
		\item If $\LIPCO(H) \leq 5$: there exists a deterministic algorithm that computes $\OHC_{H}(G)$ in time $O(m\log n)$.\footnote{The exact dependency on the degeneracy $\degen$ of the input graph $G$ is $O\left(\kappa^{|H|-1}\right)$.}
		\item If $\LIPCO(H) > 5$: assume the Triangle Detection Conjecture. There is no algorithm with (expected) running time $O(m^{1+\gamma})$ that computes $\OHC_{H}(G)$.
	\end{itemize}
\end{theorem}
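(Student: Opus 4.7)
The overall plan is to leverage the existing total-count dichotomy of \cite{BeGiLe+22}---which establishes that $\Hom{G}{H}$ is computable in near-linear time on bounded-degeneracy $G$ if and only if the longest induced cycle of $H$ has length at most $5$---and lift it to the per-vertex, per-orbit level via inclusion-exclusion on subsets of each orbit, with $\LIPCO(H)$ replacing the longest-induced-cycle parameter.

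For the \emph{upper bound} ($\LIPCO(H) \leq 5$), fix an orbit $\psi \in \Psi(H)$ and for nonempty $S \subseteq \psi$ let $N_S(v)$ denote the number of $\phi \in \Phi(H,G)$ with $\phi(s)=v$ for every $s \in S$. Inclusion-exclusion on the event ``some vertex of $\psi$ maps to $v$'' gives
\[
\OHC_{H,\psi}(v) \;=\; \sum_{\emptyset \neq S \subseteq \psi} (-1)^{|S|+1}\, N_S(v),
\]
and each $N_S(v)$ is a vertex-indexed homomorphism count on the contracted pattern $H/S$ with the merged vertex pinned to $v$. The structural step is a lemma: whenever $\LIPCO(H) \leq 5$, every contracted pattern $H/S$ with $S$ inside a single orbit has longest induced cycle at most $5$. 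The proof splits on whether an induced cycle of $H/S$ avoids the merged vertex (it is then an induced cycle of $H$, covered by the $h=h'$ case of $\LIPCO$) or passes through it; in the latter case the cycle unfolds to a walk in $H$ between two vertices of $S$, and a short chord analysis shows the walk either already yields an induced path of the same length between two orbit-mates or closes into an induced cycle of $H$ of length at most one more---either way the cycle length is at most $\LIPCO(H)$. Armed with the lemma, I would adapt the DAG-treedepth enumeration of \cite{BeGiLe+22,BePaSe21} to produce vertex-indexed counts: orient $G$ along a degeneracy order, iterate over ``sink patterns'' of $H/S$, and instead of summing partial counts maintain them per graph-vertex playing the merged role. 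This keeps the running time at $O(\kappa^{|H|-1}\, m \log n)$, and the inclusion-exclusion recombines the $N_S(v)$'s into $\OHC_{H,\psi}(v)$.

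For the \emph{lower bound} ($\LIPCO(H) > 5$), I would split on whether the witnessing induced path uses $h = h'$ (a cycle) or $h \neq h'$. The cycle case is immediate: $H$ itself has longest induced cycle greater than $5$, and since $\sum_v \OHC_{H,\psi}(v) = |\psi|\cdot \Hom{G}{H}$, any $O(m^{1+\gamma})$ algorithm for $\OHC$ would give one for $\Hom{G}{H}$, contradicting \cite{BeGiLe+22}. The path case, $h \neq h'$, is the more interesting one because $H$ may simultaneously have short induced cycles. The lever is that an automorphism sending $h$ to $h'$ induces a bijection on homomorphisms that swaps the roles of $h$ and $h'$ while fixing images in $G$; hence $N_{\{h\}}(v) = N_{\{h'\}}(v)$ for every $v \in V(G)$. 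When $|\psi| = 2$ the inclusion-exclusion identity inverts to
\[
N_{\{h,h'\}}(v) \;=\; 2\, N_{\{h\}}(v) \;-\; \OHC_{H,\psi}(v),
\]
and the left-hand side is a vertex-indexed count for the contracted pattern $H/\{h,h'\}$, which contains an induced cycle of length $> 5$ obtained by closing the witnessing path. Since $H$ itself has longest induced cycle at most $5$ in this subcase, $N_{\{h\}}(v)$ is near-linear computable by the vertex-indexed adaptation above, so a near-linear $\OHC$ oracle would give a near-linear algorithm for a pattern with induced cycle length $>5$, contradicting the Triangle Detection Conjecture via \cite{BeGiLe+22}. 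For $|\psi| > 2$ the same reduction runs by reverse induction on $|S|$, solving for $N_S(v)$ in terms of $\OHC$ and the $N_T(v)$'s with $|T| > |S|$.

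The main obstacle I anticipate is cleanly executing the path-case reduction for large orbits: the inductive recovery of $N_{\{h,h'\}}$ from $\OHC$ requires near-linear-time computation of every larger $N_T$ appearing in the identity, and some such $T$ may contract $H$ into a pattern that itself inherits an induced cycle longer than $5$, breaking the induction. The likely remedy is a gadget construction---attaching private pendant structure to the vertices of $\psi$ in a bounded-degeneracy host graph so that in the constructed instance only the single hard term $N_{\{h,h'\}}(v)$ contributes to $\OHC_{H,\psi}(v)$---together with a proof that the gadget preserves both bounded degeneracy and the planted hardness instance. Engineering such gadgets uniformly across all orbit sizes and all valid choices of $(h,h')$ is the delicate technical step of the lower bound.
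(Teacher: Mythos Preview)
Your upper-bound plan is essentially the paper's: inclusion--exclusion over independent subsets $S$ of an orbit $\psi$, reduction of each term $N_S(v)$ to a vertex-indexed count $\VHC_{H_S,h_S}(v)$ on the quotient $H_S$, the structural lemma that $\LIPCO(H)\le 5$ forces $LICL(H_S)\le 5$ for all such $S$, and Bressan's DAG-tree machinery to get the per-vertex counts in near-linear time.

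The lower bound has a genuine gap. First, the identity you use in the cycle case is false: $\sum_v \OHC_{H,\psi}(v)=\sum_{\phi}|\phi(\psi)|$, not $|\psi|\cdot\Hom{G}{H}$, because distinct orbit vertices can collide under $\phi$. Second, and more seriously, your induction/gadget plan for $|\psi|>2$ runs into exactly the obstacle you name: the single inclusion--exclusion equation cannot isolate one hard $N_S$ when several quotients $H_T$ simultaneously have long induced cycles, and engineering pendant gadgets that kill all but one hard term uniformly across all $H$ and all orbit structures is at best a substantial open-ended project. The paper sidesteps both issues with one algebraic tool. Aggregate once: by your own inclusion--exclusion and the identity $\sum_v \VHC_{H_S,h_S}(v)=\Hom{G}{H_S}$, one gets
\[
\Agg(H,G,\psi)\;:=\;\sum_{v}\OHC_{H,\psi}(v)\;=\;\sum_{S\in\IS(\psi)}(-1)^{|S|+1}\,\Hom{G}{H_S}.
\]
Grouping isomorphic $H_S$'s yields $\sum_i c_i\,\Hom{G}{H_i}$ with pairwise non-isomorphic $H_i$ and every $c_i\neq 0$ (isomorphic $H_S$ have the same $|S|$ since $|V(H_S)|=|V(H)|-|S|+1$, so signs cannot cancel). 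Now invoke the Curticapean--Dell--Marx style basis lemma (Lemma~4.2 of \cite{BeGiLe+22}): from evaluating any fixed nonzero linear combination of homomorphism counts on $O(1)$ many bounded-degeneracy, linear-size graphs $G_1,\dots,G_l$ built from $G$, one recovers each individual $\Hom{G}{H_i}$. By the structural lemma some $H_i$ has $LICL>5$, so a near-linear $\OHC$ oracle would yield a near-linear algorithm for a hard total count, contradicting \cite{BePaSe21} under the Triangle Detection Conjecture. This handles every orbit size and both the cycle and path cases uniformly, with no gadgets and no case split.
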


\subsubsection*{Orbit Counting vs Total Homomorphism Counting} In the following discussion, we use ``linear'' to actually mean
near-linear, we assume that the Triangle Detection Conjecture is true, and we assume that $G$ has bounded degeneracy.

One of the most intriguing aspects of the dichotomy of \Thm{main} is that it differs
from the condition for getting the total homomorphism count. As mentioned earlier, the inspiration for \Thm{main}
is the analogous result for determining $\Hom{G}{H}$. There is a near-linear time algorithm iff
the length of the longest induced cycle (LICL) of $H$ is at most five. Since the definition of \LIPCO{} 
considers induced cycles (induced path between a vertex to itself), if $\LIPCO(H) \leq 5$, then $LICL(H) \leq 5$.
This implies, not surprisingly, that the total homomorphism counting problem is easier than the orbit counting problem.

But there exist patterns $H$ for which the orbit counting problem is provably harder than total homomorphism counting, a simple example is the $7$-path (path with $7$ vertices). There is a simple
linear time dynamic program for counting the homomorphism of paths.
But the endpoints are in same orbit, so the \LIPCO{} is six, and \Thm{main} proves the non-existence of linear time algorithms
for orbit counting. On the other hand, the \LIPCO{} of the $6$-path is five, so orbit counting can be done in linear time.

Consider the pattern at the bottom of \Fig{graphs_loipl}.
The LICL is three, so the total homomorphism count
can be determined in linear time. Because the ends of the underlying $7$-path lie in different orbits,
the \LIPCO{} is also three (by the triangle). \Thm{main} provides a linear time algorithm for orbit counting.

\subsection{Main Ideas}

The starting point for homomorphism counting on bounded degeneracy graphs
is the seminal work of Chiba-Nishizeki on  using acyclic graph orientations~\cite{ChNi85}. It is known
that, in linear time, the edges of a bounded degeneracy graph can be acyclically oriented while keeping the outdegree bounded~\cite{MaBe83}.
For clique counting, we can now use a brute force algorithm in all out neighborhoods, and get a linear time algorithm.
Over the past decade, various researchers observed that this technique can generalize to certain other pattern
graphs~\cite{Co09,PiSeVi17,OrBr17,PaSe20}. Given a pattern $H$, one can add the homomorphism counts of all acyclic orientations of $H$ for an acyclic orientation of $G$.
In certain circumstances, each acyclic orientation can be efficiently counted by a carefully tailored
dynamic program that breaks the oriented $H$ into subgraphs spanned by rooted, directed trees.

Bressan gave a unified treatment of this approach through the notion of \emph{DAG-tree decompositions}. \cite{Br19}
These decompositions give a systematic way of breaking up an oriented pattern into smaller pieces,
such that homomorphism counts can be computed by a dynamic program. Bera et al. showed that if the LICL of $H$
is at most $5$, then the DAG-treewidth of $H$ is at most one~\cite{BePaSe21,BeGiLe+22}. This immediately implies
Bressan's algorithm runs in linear time. 

Our result on orbit counting digs deeper into the mechanics of Bressan's algorithm. To run in linear time, Bressan's algorithm
requires ``compressed'' data structures that store information about homomorphism counts.
For example, the DAG-tree decomposition based algorithm can count $4$-cycles in linear time for bounded degeneracy graphs
(this was known from Chiba-Nishizeki as well~\cite{ChNi85}). But there could exist quadratically many 
$4$-cycles in such a graph. Consider two vertices connected by $\Theta(n)$ disjoint paths of length $2$;
each pair of paths yields a distinct $4$-cycle. Any linear time algorithm for $4$-cycle counting has to carefully index
directed paths and combine these counts, without actually touching every $4$-cycle.

By carefully looking at Bressan's algorithm, we discover that ``local'' per-vertex information
about $H$-homomorphisms can be computed. Using the DAG-tree decomposition, one can combine
these counts into a quantity that looks like orbit counts. Unfortunately, we cannot get
exact orbit counts, but rather a weighted sum of homomorphisms. 

To extract exact orbit counts, we dig deeper into the relationship between orbit counts
and per-vertex homomorphism counts. This requires looking into the behavior of independent
sets in the orbits of $H$. We then design an inclusion-exclusion formula that
``inverts'' the per-vertex homomorpishm counts into orbit counts. 
The formula requires orbit counts for other patterns $H'$ that are constructed by merging
independent sets in the same orbit of $H$.

Based on previous results, we can prove that if the LICL of all these $H'$ patterns
is at most $5$, then $\OHC_H(G)$ can be computed in (near)linear time. This LICL condition
over all $H'$ is equivalent to the LIPCO of $H$ being at most $5$. Achieving the upper
bound of \Thm{main}.

The above seemingly ad hoc algorithm optimally characterizes
when orbit counting is linear time computable. To prove the matching lower bound,
we use tools from the breakthrough work of Curticapean-Dell-Marx~\cite{CuDeMa17}.
They prove that the complexity of counting linear combinations of homomorphism counts
is determined by the hardest individual count (up to polynomial factors). Gishboliner-Levanzov-Shapira give
a version of this tool for proving linear time hardness~\cite{GiLeSh20}. Consider a pattern $H$ with LIPCO at least six.
We can construct a pattern $H'$ with LICL at least six by merging vertices of an orbit in $H$.
We use the tools above to construct a constant number of linear sized graphs $G_1, G_2, \ldots, G_k$
such that a linear combination of $H$-orbit counts on these graphs yields
the total $H'$-homomorphism count on $G$. The latter problem is hard by existing bounds,
and hence the hardness bounds translate to $H$-orbit homomorphism counting.

\section{Related Work} \label{sec:related}

Homomorphism and subgraph counting on graphs is an immense topic with an extensive
literature in theory and practice. For a detailed discussion of practical applications, we refer the reader
to a tutorial~\cite{SeTi19}. 

Homomorphism counting is intimately connected with the treewidth of the pattern $H$.
The notion of tree decomposition and treewidth were introduced in a seminal work by Robertson and Seymour~\cite{RoSe83,RoSe84,RoSe86}; although it has been discovered before under
different names~\cite{BeBr73,Ha76}.
A classic result of Dalmau and Jonsson~\cite{DaJo04} proved
that $\Hom{G}{H}$ is polynomial time solvable if and
only if $H$ has bounded treewidth, otherwise it is
$\#W[1]$-complete. 
D{\'i}az~et al~\cite{DiSeTh02} gave an algorithm for homomorphism counting
with runtime $O(2^{k}n^{\tw(H)+1})$ where $\tw(H)$
is the treewidth of the pattern graph $H$ and $k$ the number of vertices of $H$.

To improve on these bounds, recent work has focused on restrictions on the input $G$~\cite{RoWe20}.
A natural restriction is bounded degeneracy, which is a nuanced measure
of sparsity introduced by early work of Szekeres-Wilf~\cite{SzWi68}.
Many algorithmic results exploit low degeneracy for faster subgraph counting problems~\cite{ChNi85,Ep94,AhNeRo+15,JhSePi15,PiSeVi17,OrBr17,JaSe17,PaSe20}.

Pioneering work of Bressan introduced the concept of DAG-treewidth for faster algorithms
for homomorphism counting in bounded degeneracy graphs~\cite{Br19}. Bressan gave an algorithm for counting $\Hom{G}{H}$ running in time
essentially $m^{\dtw(H)}$, where $\dtw$ denotes the DAG-treewidth. The result
also proves that (assuming ETH) there is no algorithm running in time $m^{o(\dtw(H)/\log \dtw(H))}$. 

Bera-Pashanasangi-Seshadhri build on Bressan's methods to discover a dichotomy theorem for linear time
homomorphism counting in bounded degeneracy graphs~\cite{BePaSe20,BePaSe21}.
Gishboliner, Levanzov, and Shapira independently proved the same characterization using slightly different methods~\cite{GiLeSh20,BeGiLe+22}.

We give a short discussion of the Triangle Detection Conjecture.
Itai and Rodeh~\cite{ItRo78} gave the first non-trivial algorithm for the triangle 
detection and finding problem with $O(m^{3/2})$ runtime. 
The current best known algorithm runs in time 
$O(\min \{n^\omega, m^{{2\omega}/{(\omega+1)}}\})$~\cite{AlYuZw97}, where $\omega$ is the matrix multiplication
exponent. Even for $\omega = 2$, the bound is $m^{4/3}$ and widely believed to be a lower bound.
Many classic graph problems have fine-grained complexity hardness
based on \TRICONJ~\cite{AbWi14}.

Homomorphism or subgraph orbit counts have found significant use in network analysis and machine learning.
Przulj introduced the use of graphlet (or orbit count) degree distributions in bioinformatics~\cite{Pr07}.
The graphlet kernel of Shervashidze-Vishwanathan-Petri-Mehlhorn-Borgwardt uses vertex orbits counts
to get embeddings of vertices in a network~\cite{ShViPe+09}.
Four vertex subgraph and large cycle and clique orbit counts have been used for 
discovering special kinds of vertices and edges~\cite{UgBaKl13,RoKaKl+17,YiBiLe18,YiBiLe19}.
Orbits counts have been used to design faster algorithms for finding dense subgraphs in practice~\cite{BeHeLa+11,SaSePi+15,Ts15,BeGlLe16,TsPaMi17}.

\section{Preliminaries} \label{sec:preliminaries}

We use $G$ to denote the input graph and $H$ to denote the pattern graph, both \break $G=(V(G),E(G))$ and $H=(V(H),E(H))$ are simple, undirected and connected graphs. We denote $|V(G)|$ and $|E(G)|$ by $n$ and $m$ respectively and $|V(H)|$ by $k$.

A pattern graph $H$ is divided into orbits, we use the definition from Bondy and Murty (Chapter $1$, Section $2$ \cite{BoMu08}):

\begin{definition} \label{def:orbits}
	Fix a graph $H = (V (H), E(H))$. An automorphism
	is a bijection $\sigma : V (H) \to V (H)$ such that $(u,v) \in E(H)$ iff $(\sigma(u), \sigma(v)) \in E(H)$.
	The group of automorphisms of $H$ is denoted $\Aut(H)$.
	
	Define an equivalence relation on $V(H)$ as follows. We say
	that $u \sim v\ (u,v \in V (H))$ iff there exists an automorphism that maps
	$u$ to $v$. The equivalence classes of the relation are called orbits.
\end{definition}

We refer to the set of orbits in $H$ as $\Psi(H)$ and to individual orbits in $\Psi(H)$ as $\psi$. Note that every vertex $h \in V(H)$ belongs to exactly one orbit. 
We can represent an orbit by a canonical (say lexicographically least) vertex in the orbit. Somewhat abusing notation, we can think of the set
of orbits as a subset of vertices of $H$, where each vertex plays a ``distinct role'' in $H$.
\Fig{graphs_loipl} has examples of different graphs with their separate orbits.

We now define homomorphisms.

\begin{definition} \label{def:hom}
	An \emph{$H$-homomorphism} from $H$ to $G$ is a mapping $\phi: V(H) \to V(G)$ such that for all $(u,v) \in E(H)$, $(\phi(u),\phi(v))\in E(G)$. 
	We refer to the set of homomorphisms from $H$ to $G$ as $\Phi(H,G)$.
\end{definition}

We now define a series of counts.
\smallskip
\begin{itemize}
	\item $\Hom{G}{H}$: This is the count of $H$-homomorphisms in $G$.
	So $\Hom{G}{H} = |\Phi(H,G)|$.
	\item $\OHC_{H,\psi}(v)$: For a vertex $v \in V(G)$, $\OHC_{H,\psi}(v)$ is the number
	of \linebreak $H$-homomorphisms that map any vertex in the orbit $\psi$ to $v$. Formally, $\OHC_{H,\psi}(v) = |\{\phi \in \Phi(H,G) \colon \exists u\in \psi,\, \phi(u)=v\}|$.
	\item $\OHC_{H,\psi}(G), \OHC_H(G)$: We use $\OHC_{H,\psi}(G)$ to denote
	the list/vector of counts $\{\OHC_{H,\psi}(v)\}$ over all $v \in V(G)$.
	Similarly, $\OHC_H(G)$ denotes the sequence of lists of counts $\OHC_{H,\psi}(G)$ over all orbits $\psi$.
\end{itemize}
\medskip

Our aim is to compute $\OHC_H(G)$, which are a set of homomorphism counts.
We use existing algorithmic machinery to compute homomorphism counts per vertex of $H$, so part of our analysis
will consist of figuring out how to go between these counts. As we will see, this is where the \LIPCO{}
parameter makes an appearance.

\subparagraph*{Acyclic orientations} These are a key algorithmic tool in efficient algorithms for bounded degeneracy graphs.
An acyclic orientation of an undirected graph $G$ is a digraph obtained by directing the edges of $G$ 
such that the digraph is a DAG.
We will encapsulate the application of the degeneracy in the following lemma, which holds
from a classic result of Matula and Beck \cite{MaBe83}.

\begin{lemma} \label{lem:degen-ordering} Suppose $G$ has degeneracy $\degen$. Then, in $O(m+n)$ time,
	one can compute an acyclic orientation $\dirG$ of $G$ with the following property.
	The maximum outdegree of $\dirG$ is precisely $\degen$. ($\dirG$ is also called
	a \emph{degeneracy orientation}.)
\end{lemma}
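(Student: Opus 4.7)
The plan is to use the classical degeneracy-ordering algorithm of Matula and Beck, orient edges in the direction of increasing order, and verify both acyclicity and the outdegree bound.

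First I would construct a total ordering $v_1, v_2, \ldots, v_n$ of $V(G)$ as follows. Set $G_n = G$. For $i = n, n-1, \ldots, 1$, pick $v_i$ to be any vertex of minimum degree in $G_i$ and define $G_{i-1} = G_i \setminus \{v_i\}$. By \Def{degen}, since every subgraph of $G$ has a vertex of degree at most $\kappa$, the vertex $v_i$ has degree at most $\kappa$ inside $G_i$. I would then form $\dirG$ by orienting each edge $\{v_i, v_j\}$ with $i < j$ from $v_i$ to $v_j$.

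Next I would verify the two structural claims. Acyclicity is immediate because any directed edge goes from a lower index to a higher index, so a directed cycle would force a strict increase followed by a return to the starting index, which is impossible. For the outdegree bound, note that the outgoing neighbors of $v_i$ under this orientation are exactly its neighbors $v_j$ with $j > i$, i.e., its neighbors that still survive in $G_i$; this is precisely the degree of $v_i$ in $G_i$, which is at most $\kappa$. The maximum outdegree is in fact $\kappa$ whenever $G$ has some subgraph realizing the degeneracy bound tightly, which is the standard statement.

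The main obstacle is achieving the $O(m+n)$ running time: a naive implementation that scans for the minimum-degree vertex costs $\Theta(n^2)$. I would address this with Matula--Beck's bucket data structure. Maintain an array $B$ of doubly linked lists, where $B[d]$ stores the vertices whose current degree in the remaining graph equals $d$, together with an array recording each vertex's current degree and bucket-pointer, and a pointer $p$ to the smallest index with $B[p]$ nonempty. Extracting a minimum-degree vertex costs $O(1)$ amortized (advance $p$ when its bucket empties, but $p$ can also decrease by one per vertex removal, so total bucket-pointer work is $O(n)$). Removing a vertex $v$ requires visiting each surviving neighbor $u$ once to decrement its degree and move it from $B[d(u)]$ to $B[d(u)-1]$, which is $O(1)$ per edge endpoint. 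Summing, the total work is $O(m+n)$. Finally, given the ordering, assigning orientations to the $m$ edges is a single linear scan. This completes the construction within the claimed time bound.
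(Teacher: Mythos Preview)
The paper does not actually prove this lemma; it simply attributes it to Matula and Beck \cite{MaBe83} and states it as a black box. Your write-up is therefore more detailed than anything in the paper, and the overall plan (repeatedly delete a minimum-degree vertex, orient along the resulting order, use the bucket structure for linear time) is exactly the Matula--Beck argument the paper is invoking.

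There is, however, a genuine slip in your execution. With your indexing convention, $v_n$ is removed first and $G_i=\{v_1,\ldots,v_i\}$. Hence the neighbors of $v_i$ that ``still survive in $G_i$'' are those $v_j$ with $j<i$, not $j>i$. If you orient each edge from the smaller index to the larger index as you wrote, then the out-neighbors of $v_i$ are the $v_j$ with $j>i$, which are precisely the vertices already removed, and their number is not controlled by the degree of $v_i$ in $G_i$. The fix is to reverse one of the two choices: either orient each edge from the larger index to the smaller (so that the out-neighbors of $v_i$ are its neighbors inside $G_i$, of which there are at most $\kappa$), or equivalently relabel so that the first vertex removed is $v_1$. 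With that correction the acyclicity and outdegree arguments go through exactly as you intended, and the running-time analysis via buckets is fine.
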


The set of all acyclic orientations of $H$ is denoted $\Sigma(H)$. Our algorithm will enumerate over all such orientations.

Note that all definitions of homomorphisms carry over to DAGs.

\subsection{DAG-tree decompositions} \label{sec:dagtree}

A central part of our result is applying intermediate lemmas from an important algorithm
of Bressan for homomorphism counting~\cite{Br19}. This subsection gives a technical
overview of Bressan's techique of DAG-tree decompositions and related lemmas. Our aim is to state
the key lemmas from previous work that can be used as a blackbox.

The setting is as follows. We have an acyclic orientation $\dirG$ and a DAG pattern
$P$ (think of $P$ as a member of $\Sigma(H)$; $P$ is an acyclic orientation of $H$). 
Bressan's algorithm gives a dynamic programming approach to counting $\Phi(P,\dirG)$.

We introduce some notation. We use the standard notion of reachability in digraphs: vertex $v$
is reachable from $u$ if there is a directed path from $u$ to $v$.
\begin{itemize}
	\item $S$: The set of sources in the DAG $P$.
	\item $\Reachable_{P}(s)$: For source $s \in S$, $\Reachable_{P}(s)$ is the set of vertices in $P$ reachable from $s$.
	\item $\Reachable_{P}(B)$: Let $B \subseteq S$. $\Reachable_{P}(B) = \bigcup_{s \in B} \Reachable_P(s)$.
	\item $P[B]$: This is the subgraph of $P$ induced by $\Reachable_P(B)$.
\end{itemize}

\medskip

\begin{definition} \label{def:dag-tree}
	(DAG-tree decomposition \cite{Br19}). Let $P$ be a DAG with source vertices $S$. A
	DAG-tree decomposition of P is a tree $T = (\mathcal{B}, \mathcal{E})$ with the following three properties:
	\begin{enumerate}
		\item Each node $B \in \mathcal{B}$ (referred to as a ``bag'' of sources) is a subset of the source vertices $S$: $B\subseteq S$.
		\item The union of the nodes in $T$ is the entire set $S$: $\bigcup_{B\in\mathcal{B}}B = S$.
		\item For all $B, B_1, B_2 \in \mathcal{B}$, if $B$ lies on the unique path between the nodes $B_1$ and $B_2$ in $T$, then $\Reachable(B_1) \cap \Reachable(B_2) \subseteq \Reachable(B)$.
	\end{enumerate}
\end{definition}

\begin{definition} \label{def:treewidth} Let $P$ be a DAG. For any DAG-tree decomposition $T$ to $P$, the
	DAG-treewidth $\tau(T)$ is defined as $\max_{B \in \mathcal{B}} |B|$.
	The DAG-treewidth of $P$, denoted $\tau(P)$, is the minimum value of $\tau(T)$ over all DAG-tree decompositions
	$T$ of $P$.
\end{definition}

\subparagraph*{Two important lemmas} We state two critical results from previous work. Both of these are highly non-trivial
and technical to prove. We will use them in a black-box manner. 
The first lemma, by Bera-Pashanasangi-Seshadhri, connects the Largest Induced Cycle Length (LICL)
to DAG-treewidth~\cite{BePaSe21}.

\begin{lemma} \label{lem:bera}
	{(Theorem 4.1 in \cite{BePaSe21})} For a simple graph $H$: $LICL(H)\leq 5$ iff $\forall P \in \Sigma(H), \tau(P)=1$.
\end{lemma}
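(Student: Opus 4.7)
The statement is an iff, so the plan is to prove each direction separately. I would start with the easier reverse direction (showing $LICL(H) \geq 6$ forces $\tau(P) > 1$ for some orientation), because it clarifies why the cutoff sits precisely at 5, and then tackle the forward direction (constructing a width-$1$ DAG-tree decomposition whenever $LICL(H) \leq 5$).

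\textbf{Reverse direction (by contrapositive).} Suppose $H$ contains an induced cycle $C = v_0 v_1 \cdots v_{\ell-1} v_0$ with $\ell \geq 6$. My plan is to orient the edges of $C$ so that the vertices of even index are sources inside $C$ and the odd-indexed ones are sinks (adjusting parity when $\ell$ is odd so that we get at least three sources whose reachability ``wraps around''), and to orient the rest of the edges of $H$ arbitrarily into an acyclic orientation $P$. Let $s_1, s_2, s_3$ be three cyclically consecutive sources on $C$. Each pair $s_i, s_{i+1}$ shares exactly one sink of $C$ in their reachability inside $C$, while $s_1$ and $s_3$ share a different sink that is not reachable from $s_2$ (this is where induced length $\geq 6$ matters: with length $3,4,5$ the three sources would already coincide or the shared sinks would degenerate). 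For any candidate tree $T$ on the source set, one of $s_1, s_2, s_3$ lies between the other two on $T$; suppose $s_2$ lies on the $s_1$--$s_3$ path. Then condition (3) of \Def{dag-tree} forces $\mathrm{Reach}(s_1)\cap \mathrm{Reach}(s_3)\subseteq \mathrm{Reach}(s_2)$, which fails. Hence no width-$1$ decomposition exists, so $\tau(P)\geq 2$.

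\textbf{Forward direction.} Assume $LICL(H) \leq 5$ and fix any $P\in \Sigma(H)$ with source set $S$. The key structural claim I would aim for is: for any two sources $s,s'\in S$ that share a reachable vertex, the pair of shortest directed paths from $s$ and $s'$ to that vertex, together with any chord structure permitted by $LICL(H)\leq 5$, force $\mathrm{Reach}(s)\cap \mathrm{Reach}(s')$ to be ``controlled'' by a third source or by one of $s,s'$ itself. More concretely, I would define an auxiliary graph on $S$ whose edges connect pairs with nonempty reachability intersection, and I would build $T$ greedily by processing sources in reverse topological order and attaching each new source to the unique previously processed source whose reachable set captures the current source's shared targets. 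The verification that this $T$ satisfies condition (3) of \Def{dag-tree} reduces to ruling out configurations where three sources pairwise share targets in an ``incompatible'' way; each such configuration would close up into an induced cycle of length $\geq 6$ via the same cycle-construction used in the reverse direction, contradicting $LICL(H)\leq 5$.

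\textbf{Main obstacle.} The hard part is the forward direction: specifically, turning the intuition ``shared-reachability incompatibilities among three sources imply an induced $\geq 6$-cycle'' into a clean argument. The subtlety is that paths witnessing shared reachability may have chords in $H$, so one has to argue that the shortest undirected cycle formed by two $s$-to-$v$ paths and an appropriate third path either is induced (giving $LICL\geq 6$ immediately) or contains chords that can be used to shortcut and either produce a shorter induced cycle of length $\geq 6$ or certify that the offending sources could be merged in the tree. I would expect the cleanest route to proceed by strong induction on $|V(H)|$, where one removes a sink or splits $H$ at a cut vertex, applies the inductive hypothesis to get width-$1$ decompositions on the pieces, and then argues that these decompositions can be glued together without creating a bag of size two, again relying on the $LICL \leq 5$ constraint to control how source reachabilities from different pieces can interact.
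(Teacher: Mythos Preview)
The paper does not prove this lemma at all: it is quoted as Theorem~4.1 of \cite{BePaSe21} and explicitly used ``in a black-box manner'', so there is no in-paper argument for you to match. Any comparison is against the original source, not the present paper.

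On the substance of your plan, the reverse direction has the right skeleton but a genuine gap. After orienting the induced cycle $C$ alternately, you propose to ``orient the rest of the edges of $H$ arbitrarily''. That does not work for two reasons. First, an arbitrary extension may put incoming edges on $v_0,v_2,v_4$, so they need not be sources of $P$ at all, and the DAG-tree decomposition is indexed by sources of $P$, not of $C$. Second, even if they remain sources, nothing prevents the cycle sink $v_5$ (shared by $v_0$ and $v_4$) from being reachable from $v_2$ along a detour through $V(H)\setminus V(C)$, destroying the very non-containment $\mathrm{Reach}(v_0)\cap\mathrm{Reach}(v_4)\not\subseteq\mathrm{Reach}(v_2)$ you need. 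The standard fix is to pick a linear order placing $v_0,v_2,v_4$ first, then $v_1,v_3,v_5$, then all non-cycle vertices, and orient left-to-right; because $C$ is induced, each $v_{2i+1}$ then has exactly its two cycle neighbours as in-neighbours, hence is reachable only from those two sources. Your tree step also needs tightening: bags can repeat and other sources may exist, so rather than ``one of $s_1,s_2,s_3$ lies between the other two'' you should take one occurrence each of $\{v_0\},\{v_2\},\{v_4\}$, look at their median node $\{s\}$ in $T$, and observe that the three pairwise constraints force $s\in\{v_0,v_2\}\cap\{v_2,v_4\}\cap\{v_0,v_4\}=\emptyset$.

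Your forward-direction sketch correctly identifies where the difficulty lies, but as written it is an intuition rather than an argument: the assertion that every ``incompatible'' triple of sources closes up into an induced cycle of length at least six is exactly the heart of the matter, and in \cite{BePaSe21} this is handled by a careful structural case analysis rather than a one-line shortcut. Your proposed induction on $|V(H)|$ with gluing at cut vertices is a reasonable strategy, but you would still need the three-source lemma inside each piece, so it does not sidestep the core case analysis.
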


The second lemma is an intermediate property of Bressan's subgraph counting algorithm~\cite{Br21}. We begin by defining
homomorphism extensions. Think of some directed pattern $P$ that we are trying to count. Fix a (rooted) DAG-tree decomposition $T$.
Let $P'$ be a subgraph of $P$, $P''$ be a subgraph of $P'$. A $P'$-homomorphism $\phi'$
\emph{extends} a $P''$-homomorphism $\phi''$ if $\forall v \in V(P'')$, $\phi'(v) = \phi''(v)$. Basically,
$\phi'$ agrees with $\phi''$ wherever the latter is defined.

\smallskip

\begin{itemize}
	\item $\ext(P',G;\phi)$: Let $\phi$ be a homomorphism from a subgraph of $P'$ to $G$.
	Then $\ext(P',G;\phi)$ is the number of $P'$-homomorphisms extending $\phi$.
	\item $P[\down(B)]$: Let $B$ be a node in the DAG-tree decomposition $T$ of $P$. The set $\down(B)$
	is the union of bags that are descendants of $B$ in $T$. Furthermore $P[\down(B)]$ is the pattern
	induced by $\Reachable(\down(B))$.
\end{itemize}

A technical lemma in Bressan's result shows that extension counts can be obtained efficiently.
We will refer to the procedure in this lemma as ``Bressan's algorithm''.

\begin{lemma} \label{lem:bressan}
	{(Lemma 5 in \cite{Br21})}: Let $G^{\to}$ be a digraph with outdegree at most $d$ and $P$ be a DAG with $k$ vertices.
	Let $T = (\mathcal{B}, \mathcal{E})$ be a DAG-tree decomposition for $P$, and $B$ any element of $\mathcal{B}$.
	There is a procedure, that in time $O(|\mathcal{B}| poly (k)d^{k-\tau(T)} n^{\tau(T)} \log n)$, returns
	a dictionary storing the following values: for every $\phi: P[B] \to G^{\to}$,
	it has $\ext(P[\down(B)],G;\phi)$.
\end{lemma}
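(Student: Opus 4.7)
The plan is to root the DAG-tree $T$ at an arbitrary node and run a bottom-up dynamic program that, after processing the tree, stores the required dictionary at every bag (so in particular at the specified $B$). At a leaf bag $B'$ we have $\down(B') = B'$, so $P[\down(B')] = P[B']$ and $\ext(P[B'], G^{\to}; \phi)=1$ for every homomorphism $\phi: P[B'] \to G^{\to}$. Such $\phi$ are enumerated by choosing an image in $V(G^{\to})$ for each source in $B'$ and then, processing $\Reachable(B')$ in topological order, walking along out-edges and checking that remaining edges of $P[B']$ are respected in $G^{\to}$. Since each of the $|B'|$ sources contributes at most $n$ choices and each downstream vertex at most $d$, there are at most $n^{|B'|} d^{|\Reachable(B')|-|B'|} \le n^{\tau(T)} d^{k-\tau(T)}$ valid $\phi$, each stored in $\poly(k)\log n$ time via a hashed or sorted-array dictionary.

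At an internal bag $B'$ with children $B_1,\ldots,B_c$, the key step is to exploit the junction property (property (3) of \Def{dag-tree}): applied to any pair of sources lying in the subtrees below $B_i$ and $B_j$ with $i\ne j$, the tree path between them passes through $B'$, forcing $\Reachable(\down(B_i)) \cap \Reachable(\down(B_j)) \subseteq \Reachable(B')$. Consequently the vertex sets that distinct children add beyond $\Reachable(B')$ are pairwise disjoint, and no edge of $P$ can cross between them (such an edge would create a shared reachable vertex outside $\Reachable(B')$, violating property (3)). So once $\phi: P[B'] \to G^{\to}$ is fixed, the extensions into different child subtrees are independent. Writing $I_j := \Reachable(B') \cap \Reachable(B_j)$, we obtain
\[
\ext(P[\down(B')], G^{\to}; \phi) \;=\; \prod_{j=1}^{c}\ \sum_{\substack{\phi'_j:\, P[B_j]\to G^{\to}\\ \phi'_j|_{I_j} \,=\, \phi|_{I_j}}} \ext(P[\down(B_j)], G^{\to}; \phi'_j),
\]
where each inner sum is evaluated by iterating over completions of the fixed partial map $\phi|_{I_j}$ to a valid $\phi'_j$ on $P[B_j]$ (by again choosing images for the unfixed sources of $B_j$ and walking out-edges) and looking up the stored value for the child $B_j$ in $O(\log n)$ per access.

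A standard accounting bounds the work at $B'$ by $O\!\left(n^{\tau(T)} d^{k-\tau(T)} \poly(k) \log n\right)$: there are at most $n^{\tau(T)} d^{k-\tau(T)}$ keys $\phi$, and for each, the combined enumeration across children touches each of the at most $k-|\Reachable(B')|$ newly-assigned vertices at most a $d$-factor's worth of times, yielding an overall $n^{|B'|} d^{k-|B'|} \le n^{\tau(T)} d^{k-\tau(T)}$ bound on work per bag (using $n \ge d$). Summing over $|\mathcal{B}|$ bags gives the claimed runtime. I expect the main obstacle to be justifying the product decomposition in the display above; it rests on two distinct consequences of property (3)—disjointness of the per-child ``new'' vertex sets, and the absence of $P$-edges crossing between such sets for different children—both of which require a careful case analysis of paths in $T$ versus reachability in $P$. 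A secondary subtlety is bookkeeping the interfaces $I_j$: one must check that a completion $\phi'_j$ consistent with $\phi|_{I_j}$ is automatically a valid homomorphism on the edges of $P[I_j]$, which follows since $P[I_j]$ is a common subgraph of $P[B']$ and $P[B_j]$ and $\phi$ already respects its edges.
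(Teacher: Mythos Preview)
The paper does not give its own proof of this lemma: it is quoted verbatim as ``Lemma~5 in \cite{Br21}'' and the authors state explicitly that they ``will use [it] in a black-box manner.'' So there is no in-paper argument to compare against; what you have written is essentially a reconstruction of Bressan's original proof, and the overall architecture (root $T$, bottom-up DP over bags, independence of child subtrees via property~(3), multiplicativity of extension counts) is the right one.

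One point where your sketch is loose is the per-bag runtime accounting. You write that for each key $\phi$ at $B'$, the enumeration across children ``touches each of the at most $k - |\Reachable(B')|$ newly-assigned vertices at most a $d$-factor's worth of times.'' That is not literally true: a source $s \in B_j \setminus B'$ is a newly-assigned vertex but contributes an $n$-factor, not a $d$-factor, since sources are not reachable from anything. The clean way to get the stated bound is to first, for each child $B_j$, project its dictionary down to the interface $I_j$ (i.e.\ precompute, for every $\psi: P[I_j]\to G^{\to}$, the sum $\sum_{\phi'_j|_{I_j}=\psi} \ext(P[\down(B_j)],G^{\to};\phi'_j)$) by a single pass over the at most $n^{|B_j|}d^{k-|B_j|}\le n^{\tau(T)}d^{k-\tau(T)}$ entries of the child dictionary; then each $\phi$ at $B'$ needs only one $O(\log n)$ lookup per child. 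Summing the number-of-children over all bags gives $|\mathcal{B}|-1$, and the claimed $O(|\mathcal{B}|\,\poly(k)\,d^{k-\tau(T)} n^{\tau(T)}\log n)$ follows. Your product formula and the two uses of property~(3) you isolate (disjointness of the new vertex sets, and absence of cross-edges between them) are correct as stated.
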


Let us explain this lemma in words. For any bag $B$, which is a set of sources in $P$,
consider $P[B]$, which is the subgraph induced by $\Reachable_P(B)$. For every
$P[B]$-homomorphism $\phi$, we wish to count the number of extensions to $P[\down(B)]$
(the subgraph induced by vertices of $P$ reachable by any source in any descendant bag of $B$).

\section{Obtaining Vertex-Centric Counts} \label{sec:vhc}

We define vertex-centric
homomorphism counts, which allows us to ignore orbits and symmetries in $H$.
Quite simply, for vertices $h \in V(H)$ and $v \in V(G)$, we count the number of homomorphisms from $H$ to $G$ that map $h$ to $v$.

\begin{definition} \label{def:vhc}
	{Vertex-centric Counts}: For each vertex $h \in V(H)$ and vertex $v \in V(G)$, let $\VHC_{H,h}(v)$
	be the number of $H$-homomorphisms that map $h$ to $v$.
	
	Let $\VHC_H(G)$ denote the list of $\VHC_{H,h}(v)$ over all $h \in V(H)$ and $v \in V(G)$.
\end{definition}

We can show that the vertex-centric counts can be obtained in near-linear time when $LICL(H) \leq 5$:

\begin{theorem} \label{thm:vhc_upper_2} There is an algorithm that takes as input 
	a bounded degeneracy graph $G$ and a pattern $H$ with $LICL(H) \leq 5$, and has the following properties.
	It outputs $\VHC_H(G)$ and runs in $O(n\log{n})$ time.
\end{theorem}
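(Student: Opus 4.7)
The plan is to lift each homomorphism from $H$ to $G$ to a unique acyclic orientation via a degeneracy orientation of $G$, and then reduce per-vertex counting to a constant number of invocations of Bressan's algorithm. Compute a degeneracy orientation $\dirG$ of $G$ in time $O(m+n) = O(n)$ via \Lem{degen-ordering}. Any $\phi \in \Phi(H,G)$ orients each edge of $H$ according to the direction of its image in $\dirG$, and this orientation is acyclic (otherwise $\dirG$ would contain a cycle), yielding a canonical $P(\phi) \in \Sigma(H)$ with $\phi \in \Phi(P(\phi),\dirG)$. Conversely every $\phi \in \Phi(P,\dirG)$ is a valid undirected $H$-homomorphism, so $\Phi(H,G) = \bigsqcup_{P \in \Sigma(H)} \Phi(P,\dirG)$, and
\[
\VHC_{H,h}(v) \;=\; \sum_{P \in \Sigma(H)} \bigl|\,\{\phi \in \Phi(P,\dirG) : \phi(h)=v\}\,\bigr|.
\]
Since $|\Sigma(H)|$ depends only on $|H|$, it suffices to compute each inner count in $O(n\log n)$ time per pair $(P,h)$.

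Fix $P \in \Sigma(H)$ and $h \in V(H)$. By \Lem{bera}, the hypothesis $LICL(H)\le 5$ forces $\tau(P)=1$, so let $T=(\mathcal{B},\mathcal{E})$ be a DAG-tree decomposition of $P$ in which every bag is a single source. Because $h$ is reachable from some source $s_h$ of $P$, the second condition of \Def{dag-tree} guarantees that $s_h$ lies in some bag of $T$; width one forces that bag to be $\{s_h\}$, which we denote $B_h$. Root $T$ at $B_h$: none of the three conditions in \Def{dag-tree} depend on the rooting and $\tau(T)$ is unchanged, but now $\down(B_h)=\mathcal{B}$ so $P[\down(B_h)]=P$. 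Invoking \Lem{bressan} on this rooted decomposition at bag $B_h$ returns, in time $O(|\mathcal{B}|\,\poly(k)\,\kappa^{k-1}\, n \log n)=O(n\log n)$, a dictionary mapping each $\phi: P[B_h]\to\dirG$ to $\ext(P,\dirG;\phi)$, the number of $P$-homomorphisms extending $\phi$. Since $h\in\Reachable_P(s_h)=V(P[B_h])$, every dictionary key $\phi$ determines an image $\phi(h)$, and a single linear pass over the dictionary yields
\[
\bigl|\,\{\phi\in\Phi(P,\dirG):\phi(h)=v\}\,\bigr| \;=\; \sum_{\substack{\phi:\,P[B_h]\to\dirG \\ \phi(h)=v}} \ext(P,\dirG;\phi).
\]
Aggregating over the constant-size families $\Sigma(H)$ and $V(H)$ produces $\VHC_H(G)$ in total time $O(n\log n)$.

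The main technical obstacle is that \Lem{bressan} natively outputs only aggregate per-key extension counts for one chosen bag, and therefore only directly exposes the images of vertices in that bag's reachability set. The resolution is to exploit the fact that the tree in \Def{dag-tree} is an unrooted object: for each target vertex $h$ one re-roots $T$ at a bag whose reachability set contains $h$, so that the root dictionary already indexes $\phi(h)$. Re-rooting preserves the width-$1$ property and the structural validity of the decomposition, so Bressan's algorithm still runs in near-linear time, and the constant multiplicative overhead of $|\Sigma(H)|\cdot|V(H)|$ re-roots is absorbed into the final $O(n\log n)$ bound claimed in the theorem.
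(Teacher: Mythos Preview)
Your proposal is correct and follows essentially the same approach as the paper: orient $G$ via the degeneracy ordering, split $\Phi(H,G)$ as a disjoint union over $\Sigma(H)$, use \Lem{bera} to get a width-one DAG-tree decomposition for each $P$, and for each target vertex $h$ re-root the decomposition at a singleton bag whose reachability set contains $h$ so that Bressan's dictionary already exposes $\phi(h)$. The paper packages the re-rooting plus dictionary scan into a separate lemma (\Lem{bressan_adapted}) and an explicit pseudocode (Algorithm~\ref{alg:counts}), and it invokes \Lem{bressan_construction} to bound the enumeration of $\Phi(P[b],\dirG)$, whereas you inline the argument and rely on the running-time bound of \Lem{bressan} to implicitly bound the dictionary size; these are cosmetic differences only.
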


Before proving this theorem we need to introduce two more lemmas. First, we invoke the following lemma from \cite{Br21}:

\begin{lemma}
	\label{lem:bressan_construction}
	{(Lemma 4 in \cite{Br21})}: Given any $B \subseteq S$, the set of homomorphisms $\Phi(P[b],\dirG)$ has size $O(d^{k-|B|}n^{|B|})$ and can be enumerated in time $O(k^2 d^{k-|B|}n^{|B|})$.
\end{lemma}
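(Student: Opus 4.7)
The plan is to prove both bounds simultaneously by constructing a depth-first enumeration algorithm that processes the vertices of $P[B]$ in a topological order, with the sources in $B$ listed first. Fix such an order $v_1,\dots,v_t$ where $t = |V(P[B])|\leq k$ and $v_1,\dots,v_{|B|}$ are exactly the vertices of $B$. The algorithm builds a homomorphism $\phi : P[B]\to\dirG$ incrementally: at step $i$, having already committed $\phi(v_1),\dots,\phi(v_{i-1})$, it generates a set of candidate images for $v_i$, tests each for consistency with every already-placed neighbor of $v_i$ (a constant-per-check adjacency query in $\dirG$, so $O(k)$ per candidate), and recurses on those that pass.

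The crux is bounding the branching factor. For a source $v_i\in B$, any vertex of $\dirG$ is a valid candidate, giving at most $n$ choices. For $i>|B|$, I claim $v_i$ has at least one in-neighbor $v_j$ in $P[B]$ with $j<i$. This follows from the definition of $P[B]$: since $v_i\in \Reachable_P(B)\setminus B$, there is a directed path in $P$ from some $s\in B$ to $v_i$; the predecessor of $v_i$ on this path lies in $\Reachable_P(B)=V(P[B])$ and, by topological order, precedes $v_i$. Then $\phi(v_i)$ must be an out-neighbor of the already-fixed $\phi(v_j)$ in $\dirG$, so there are at most $d$ candidates because $\dirG$ has maximum outdegree at most $d$.

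Multiplying these branching factors shows that the search tree has at most $n^{|B|}d^{t-|B|}\leq n^{|B|}d^{k-|B|}$ leaves, each corresponding to at most one element of $\Phi(P[B],\dirG)$; hence $|\Phi(P[B],\dirG)|=O(d^{k-|B|}n^{|B|})$, which is the size claim. For the running time, the total number of nodes across all depths of the search tree is a geometric sum dominated by the number of leaves (up to a factor of $k$ for the depth), so at most $O(k\cdot n^{|B|}d^{k-|B|})$; multiplying by the $O(k)$ per-node consistency work yields the stated $O(k^2 d^{k-|B|}n^{|B|})$ enumeration bound. Emitting each discovered homomorphism at a leaf is subsumed within this cost.

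The main obstacle is the structural claim that every non-source vertex of $P[B]$ has an in-neighbor inside $P[B]$ that precedes it in topological order; without this, the $d$-branching argument for non-source vertices collapses. Once this is in hand the analysis is mechanical, but care is needed to process sources and non-sources uniformly inside a single DFS while maintaining the invariant that at the moment $v_i$ is visited, at least one of its in-neighbors in $P[B]$ has already been assigned, so that the outdegree bound of $\dirG$ actually constrains the candidate set.
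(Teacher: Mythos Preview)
The paper does not prove this lemma at all; it is quoted verbatim from Bressan's work and invoked as a black box. Your argument is correct and is essentially the standard proof one finds in \cite{Br21}: enumerate in topological order, branch over all $n$ vertices of $\dirG$ at each source in $B$, and at every non-source use an already-assigned in-neighbor in $P[B]$ to restrict the candidates to the at most $d$ out-neighbors in $\dirG$. The structural claim you flag as the ``main obstacle'' (every non-source vertex of $P[B]$ has an in-neighbor inside $P[B]$) is exactly the right point and your justification via a directed path from $B$ is sound, so there is nothing to add.
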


Second, we show how to use the output of Bressan's algorithm to obtain the Vertex-centric counts:

\begin{lemma} \label{lem:bressan_adapted}
	Let $P$ be a directed pattern on $k$ vertices, $T=(\mathcal{B},\mathcal{E})$ be a DAG-tree decomposition of $P$ with $\tau(P)=1$ (All nodes/bags in $T$ are singletons),
	and $\dirG$ be a directed graph with $n$ vertices and max degree $d$. Let $b$ be the root of $T$ and $h$ be any vertex in $P[b]$.
	We can compute $\VHC_{P,h}(v)$ in time $O(poly(k) d^{k-1}n\log{n})$.
\end{lemma}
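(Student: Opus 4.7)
The plan is to reduce the computation of $\VHC_{P,h}(v)$ directly to one invocation of Bressan's procedure (\Lem{bressan}) on the given decomposition $T$ rooted at $b$, followed by a single aggregation pass. Since $b$ is the root of $T$, the set $\down(b)$ is all of $\mathcal{B}$, and by property (2) of a DAG-tree decomposition the union of the bags is the full source set $S$, so $\Reachable_P(\down(b)) = \Reachable_P(S) = V(P)$. Consequently $P[\down(b)] = P$, and the dictionary returned by \Lem{bressan} maps every $\phi \colon P[b] \to \dirG$ to the value $\ext(P, \dirG; \phi)$, i.e.\ the exact number of full $P$-homomorphisms extending $\phi$.

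With this dictionary available, I would next enumerate $\Phi(P[b], \dirG)$ using \Lem{bressan_construction}; since $|b|=1$ this list has size $O(d^{k-1} n)$ and can be produced in $O(k^2 d^{k-1} n)$ time. I initialize an accumulator $A[v] = 0$ for every $v \in V(G)$, then for each enumerated $\phi$ I read the stored value $\ext(P, \dirG; \phi)$ from the dictionary and add it to $A[\phi(h)]$. The correctness is a straightforward partitioning argument: every $P$-homomorphism $\psi$ with $\psi(h)=v$ has a unique restriction $\phi := \psi|_{P[b]}$ satisfying $\phi(h) = v$, and conversely each $\phi \in \Phi(P[b], \dirG)$ with $\phi(h)=v$ contributes $\ext(P, \dirG; \phi)$ distinct full $P$-homomorphisms mapping $h$ to $v$. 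Hence $A[v]$ equals $\VHC_{P,h}(v)$ at the end.

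For the running time, \Lem{bressan} with $\tau(T)=1$ costs $O(|\mathcal{B}| \cdot \poly(k) \cdot d^{k-1} \cdot n \log n)$, and since $T$ has at most $k$ bags this is $O(\poly(k)\, d^{k-1}\, n \log n)$. The enumeration and aggregation add only $O(k^2 d^{k-1} n)$, which is absorbed. The dictionary lookup is assumed to be $O(\log n)$ in the word RAM model, again absorbed by the leading term. Thus the total time matches the bound claimed in the statement.

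The only conceptual point that is non-routine is recognizing that rooting $T$ at $b$ makes $P[\down(b)]$ equal to the whole pattern $P$, so that a single application of \Lem{bressan} yields counts of entire $P$-homomorphisms keyed by their restriction to $P[b]$ (which contains $h$). Once this is in place, the rest is bookkeeping. The main technical obstacle I anticipate is a careful justification that $h \in V(P[b])$ guarantees $\phi(h)$ is well-defined across all enumerated $\phi$; this is immediate from the hypothesis, but must be stated explicitly since vertices of $P$ not reachable from $b$ would not be in the domain of $\phi$ and would require a different choice of root.
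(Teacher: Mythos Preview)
Your proposal is correct and follows essentially the same approach as the paper: invoke \Lem{bressan} at the root $b$ so that $P[\down(b)]=P$, enumerate $\Phi(P[b],\dirG)$ via \Lem{bressan_construction}, and aggregate the extension counts by the image $\phi(h)$. The paper phrases the aggregation via explicitly named sets $\Phi_{b,h,v}$ rather than an accumulator array, but the argument and running-time analysis are identical.
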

\begin{proof}
	The algorithm of \Lem{bressan} will return a data structure/dictionary that gives the following values. For each $\phi: P[b] \to \dirG$, it provides $\ext(P[\down(b)],G;\phi)$. 
	Note that $b$ is the root of $T$. By the properties of a DAG-tree decomposition, $\down(b)$ contains all vertices of $P$ and $P[\down(b)] = P$. 
	Hence, the dictionary gives the values $\ext(P,\dirG;\phi)$, that is, the number of homomorphisms $\phi': P \to \dirG$ that extend $\phi$.
	
	Let $h$ be a vertex in $P[b]$. We can partition the set of homomorphisms from \break $P[b]$ to $\dirG$, $\Phi(P[b],\dirG)$, into sets $\Phi_{b,h,v}$ defined as follows.
	For each $v \in V(G)$,\break $\Phi_{b,h,v} \eqdef \{\phi \in \Phi(P[b],\dirG) : \phi(h) = v\}$.
	
	By Lemma \ref{lem:bressan_construction} we can list all the homomorphisms $\Phi(P[b],\dirG)$ in $O(k^2 d^{k-1}n)$ time, by the same lemma we know that $\Phi(P[b],\dirG)$ will have size at most $O(d^{k-1}n)$, hence we can iterate over the list of homomorphisms and check the value of $\phi(h)$. We can then express $\VHC_{P}(\dirG)$ as follows:
	\begin{align*}
		\VHC_{P,h}(v) 
		&= |\{\phi' \in \Phi(P,\dirG): \phi'(h)=v\}|
		\\ 
		&= \sum_{\phi \in \Phi(P[b], \dirG) : \phi(h)=v} ext(P,\dirG;\phi)
		\\
		&= \sum_{\phi \in \Phi_{b,h,v} : \phi(h)=v} ext(P,\dirG;\phi)
	\end{align*}
	
	We can compute all of these values by enumerating all the elements in $\phi \in \Phi_{b,h,v}$ (over all $v$), and making
	a dictionary access to get $ext(P,\dirG;\phi)$. 	
	The total running time is $O(k^2d^{k-1}n \log{n})$, where $\log{n}$ is extra overhead of accessing the dictionary.
	
	By \Lem{bressan}, the dictionary construction takes $O(|\mathcal{B}| poly (k)$ $d^{k-\tau(T)} n^{\tau(T)} \log n)$ time.
	Since $\tau(T)=1$ and $|\mathcal{B}| = O(k)$, we can express the total complexity as $O(poly(k)d^{k-1}n\log{n})$.
\end{proof}

We can now complete the proof of Theorem \ref{thm:vhc_upper_2}:

\begin{algorithm}
	\caption{ \textbf{Vertex-centric Counts:} \\ $\ComputeVHC(H,G)$}\label{alg:counts}
	\begin{algorithmic}[1]
		\State Initialize all $\VHC_{H,h}(v)$ counts to $0$ ($\forall h \in V(H), v \in V(G)$).
		\State Compute the degeneracy orientation $\dirG$ of $G$.
		\For{$P \in \Sigma(H)$}
		\State Initialize all $\VHC_{P,h}(v)$ counts to $0$.
		\State Compute width $1$ DAG-tree decomposition $T=(\mathcal{B},\mathcal{E})$.
		\For{$b \in \mathcal{B}$}
		\State Root $T$ at $b$.
		\State Use \Lem{bressan} to compute the dictionary.
		\For{$h \in P[b]$}
		\For{$v \in V(G)$}
		\If{$\VHC_{P,h}(v)$ not updated}
		\For{$\phi \in \Phi_{b,h,v}$}
		\State Compute $ext(P,\dirG;\phi)$ (dict. lookup). 
		\State Add $ext(P,\dirG;\phi)$ to $\VHC_{P,h}(v)$.
		\EndFor
		\State Mark $\VHC_{P,h}(v)$ as updated
		\EndIf
		\EndFor
		\EndFor
		\EndFor
		\State For each $h \in V(H)$ and $v \in V(G)$, add $\VHC_{P,h}(v)$ to $\VHC_{H,h}(v)$.
		\EndFor
	\end{algorithmic}
\end{algorithm}

\begin{proof}[Proof of Theorem \ref{thm:vhc_upper_2}]
	
	The algorithm is explicitly described in Algorithm \ref{alg:counts}.
	
	The first step of our algorithm is to construct the degeneracy orientation $\dirG$ of $G$. By \Lem{degen-ordering}, it
	can be computed in $O(m+n)$ time. Since $G$ has bounded degeneracy, $\dirG$ has bounded outdegree. When orienting $G$ as $\dirG$, 
	each homomorphism from $H$ to $G$ becomes a homomorphism of exactly one of the directed patterns $P\in\Sigma(H)$ to $\dirG$. We can hence compute $\VHC_H(G)$ as the sum of $\VHC_{P}(\dirG)$ for every acyclic orientation of $H$. This is given by the following equation:
	
	\begin{align} \label{eq:combining_orientations}
		\VHC_{H}(G) = \sum_{P \in \Sigma(H)} \VHC_{P}(G^{\to})
	\end{align}
	
	Because $LICL(H) \leq 5$, \Lem{bera} implies that for all $P \in \Sigma(H)$, $\tau(H) = 1$. There exists a DAG-tree decomposition $T=(\mathcal{B},\mathcal{E})$ of $P$ with $\tau(T)=1$. 
	We use the output of Bressan's algorithm to obtain the Vertex-centric counts.

	The DAG-tree decomposition $T$ can be arbitrarily rooted at any node $b$. Moreover, for each $h \in V(P)$, 
	there must exist some source $b$ such that $h \in P[b]$ (meaning, $h$ is reachable from $b$). So, by rooting $T$
	at all possible nodes (singleton bags), we can ensure that $h$ is in $P[b]$. We can apply \Lem{bressan_adapted}
	to get all counts $\VHC_{P,h}(v)$.
	
	We complete the proof by bounding the running time and asserting correctness.
	
	From Lemma \ref{lem:degen-ordering}, we can compute $\dirG$ in $O(m+n)$.
	Since $G$ has bounded degeneracy, $m = O(n)$ and the outdegree $d$ is bounded.
	The number of acyclic orientations of $H$, $|\Sigma(H)|$ is bounded by $O(k!)$. In each iteration, by \Lem{bressan_adapted}, we will take $O(poly(k) d^{k-1}n\log{n})$. 
	For constant $k$ and constant $d$, the running time is $O(n\log n)$.
	
	Now we prove the correctness of the algorithm. Consider each $P \in \Sigma(H)$.
	Let $T=(\mathcal{B},\mathcal{E})$ be the DAG-tree decomposition of $P$. For each $b \in \mathcal{B}$, we compute $\VHC_{P,h}(\dirG)$ for all the vertices in $h \in P[b]$. 
	By looping over each singleton bag $b$, we update counts for all vertices in $P$. Hence, we are computing $\VHC_{P}(\dirG)$. Finally, we sum over all $P\in \Sigma(H)$, which by Equation \ref{eq:combining_orientations}, gives us $\VHC_{H}(G)$.
	
\end{proof}

\section{From Vertex-Centric to Orbit Counts} \label{sec:orbit-dag}

We now show how to go from vertex-centric to orbit counts, using inclusion-exclusion. Much of our insights are given
by the following definitions.

\begin{definition} \label{def:IS}
	$\IS(\psi)$: Given a pattern graph $H$, for every orbit $\psi \in \Psi(H)$ we define $\IS(\psi)$ as the collection of all non empty subsets $S \subseteq \psi$, such that $S$ forms an independent set (i.e. there is no edge in $E(H)$ connecting any two vertices in $S$). 
	
	Formally, $\IS(\psi) = \{ S \subseteq \psi, S \neq \emptyset: \forall\ h,h' \in S, (h,h')\notin E(H)\}$.
\end{definition}
\begin{definition} \label{def:HS}
	$H_S$: For each set $S \in \IS(\psi)$ we define $\HS$ as the graph resulting from merging all the vertices in $S$ into a single new vertex $h_S$, removing any duplicate edge.
\end{definition}

We state two more tools in our analysis. The first lemma relates the counts obtained
in the previous section ($\VHC_{\HS}(G)$) to the desired output ($\OHC_H(G)$).

\begin{lemma} \label{lemma:vhc_to_ohc}
	{(Inclusion-exclusion formula:)}
	\begin{equation*}
		\OHC_{H,\psi}(v) = \sum_{S \in \IS(\psi)} (-1)^{|S|+1} \VHC_{\HS,h_S}(v)
	\end{equation*}
\end{lemma}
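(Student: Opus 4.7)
The plan is to apply standard inclusion-exclusion over the set of vertices in $\psi$ that a homomorphism could map to $v$. For each $h \in \psi$, let $A_h \eqdef \{\phi \in \Phi(H,G) : \phi(h) = v\}$. Then by \Def{ohc} we have $\OHC_{H,\psi}(v) = |\bigcup_{h \in \psi} A_h|$, so inclusion-exclusion yields
\begin{equation*}
\OHC_{H,\psi}(v) \;=\; \sum_{\emptyset \neq S \subseteq \psi} (-1)^{|S|+1} \Bigl|\bigcap_{h \in S} A_h\Bigr|.
\end{equation*}

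The first step is to argue that any $S \subseteq \psi$ which is not an independent set contributes zero. If some pair $h, h' \in S$ has $(h,h') \in E(H)$, then any $\phi \in \bigcap_{h \in S} A_h$ would satisfy $\phi(h) = \phi(h') = v$ while also requiring $(\phi(h), \phi(h')) = (v, v) \in E(G)$; since $G$ is simple and has no self-loops, no such $\phi$ exists. Therefore the sum collapses to a sum over $S \in \IS(\psi)$.

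The second step is to show that for $S \in \IS(\psi)$, $|\bigcap_{h \in S} A_h| = \VHC_{\HS, h_S}(v)$, via an explicit bijection between the two sets of homomorphisms. Given $\phi \colon H \to G$ with $\phi(h) = v$ for every $h \in S$, define $\phi' \colon \HS \to G$ by $\phi'(h_S) = v$ and $\phi'(u) = \phi(u)$ for every $u \in V(H) \setminus S$. Conversely, given $\phi' \colon \HS \to G$ with $\phi'(h_S) = v$, define $\phi$ by $\phi(h) = v$ for $h \in S$ and $\phi(u) = \phi'(u)$ otherwise. Edge preservation transfers in both directions: every edge $(h,u) \in E(H)$ with $h \in S$, $u \notin S$ corresponds to the edge $(h_S, u) \in E(\HS)$ by construction of $\HS$ (duplicates being removed); edges wholly outside $S$ are unaffected; and there are no edges internal to $S$ precisely because $S$ is independent, so the contraction creates no self-loop and $\HS$ is a well-defined simple graph.

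Combining these two steps gives exactly the claimed formula. The only subtlety — and the one place the hypothesis $S \in \IS(\psi)$ is essential — is in the bijection of the second step: independence guarantees that contracting $S$ to a single vertex produces a simple graph $\HS$ for which $\VHC_{\HS,h_S}(v)$ is even defined, and simultaneously guarantees that no intersection $\bigcap_{h \in S} A_h$ indexed by a non-independent $S$ can be nonzero. Everything else is routine inclusion-exclusion and unpacking definitions.
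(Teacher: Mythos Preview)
Your proof is correct and follows essentially the same approach as the paper: both rely on the bijection between $\{\phi \in \Phi(H,G) : \phi(h)=v \text{ for all } h\in S\}$ and $\{\phi' \in \Phi(\HS,G) : \phi'(h_S)=v\}$, together with the observation that non-independent $S$ contribute nothing because $G$ has no self-loops. The only difference is organizational: you invoke the union form of inclusion--exclusion directly on the sets $A_h$, whereas the paper partitions homomorphisms by their exact ``signature'' $\{h\in\psi:\phi(h)=v\}$ and then re-derives the inclusion--exclusion cancellation via the binomial identity $\sum_{\emptyset\neq S\subseteq S'}(-1)^{|S|+1}=1$; your route is the more streamlined of the two.
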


In order to prove this lemma, we need to define the Signature of a homomorphisms. Let $\phi$ be a homomorphism from $H$ to $G$, we define $\Signature(\phi, \psi, v)$ to be the subset of vertices from the orbit $\psi$ that are mapped to $v$ in $\phi$. Formally $\Signature(\phi, \psi, h) = \{h \in \psi : \phi(h) = v\}$. 

We prove a series of claims regarding the signature.

\begin{claim} \label{claim:signature_IS}
	The Signature of $\phi$ from $\psi$ to $v$, $\Signature(\phi, \psi, v)$, must form an independent set of vertices in $V(H)$, that is, there are no edges in $E(H)$ connecting two vertices in $\Signature(\phi, \psi, v)$.
\end{claim}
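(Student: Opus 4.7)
The plan is to derive a contradiction from the existence of two adjacent vertices in the signature, using only the definition of a homomorphism together with the standing assumption that $G$ is a simple graph (no self-loops).

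First, I would unpack the definition. By construction, $\Signature(\phi,\psi,v)$ is the set of all $h \in \psi$ with $\phi(h)=v$. So any two distinct elements $h, h'$ of $\Signature(\phi,\psi,v)$ satisfy $\phi(h) = \phi(h') = v$. Assume toward contradiction that there exist $h,h' \in \Signature(\phi,\psi,v)$ with $(h,h') \in E(H)$.

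Next, I would apply the homomorphism property from \Def{hom}: since $\phi$ is an $H$-homomorphism and $(h,h') \in E(H)$, we must have $(\phi(h),\phi(h')) \in E(G)$. Substituting $\phi(h)=\phi(h')=v$ yields $(v,v) \in E(G)$, i.e.\ a self-loop at $v$. This contradicts the standing assumption (stated at the top of \Sec{preliminaries}) that $G$ is a simple graph without self-loops. Hence no such edge can exist, and $\Signature(\phi,\psi,v)$ is an independent set in $H$.

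I do not anticipate any real obstacle here; the claim is essentially immediate from the definitions. The only subtlety is to remember that the vertices of the signature need not be distinct from each other in any nontrivial way beyond being elements of $\psi$, and that the argument works equally well whether $|\Signature(\phi,\psi,v)|$ is $0$, $1$, or larger, since independence is vacuous for sets of size at most one.
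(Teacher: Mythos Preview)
Your proposal is correct and takes essentially the same approach as the paper: both argue by contradiction, noting that an edge $(h,h')$ between two vertices mapped to the same $v$ would force $\phi$ to violate the homomorphism property. Your version is slightly more explicit in pinpointing the contradiction as a self-loop $(v,v)\in E(G)$ against the simplicity of $G$, whereas the paper simply says the edge $(h,h')$ is not preserved, but the underlying argument is identical.
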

\begin{proof}
	We prove by contradiction. Assume that  $S= \Signature(\phi, \psi, v)$ is not an Independent Set of vertices of $V(H)$, that means that we have a pair of vertices $h,h' \in S$ such that there is an edge connecting them. But from the definition of signature we have that $\phi(h) = \phi(h') = v$, however this is not a valid homomorphism from $H$ to $G$ as it is not preserving the $(h,h')$ edge.
\end{proof}

The next claim allows us to relate the Signature with the Homomorphism Orbit Counts. 
\begin{claim} \label{claim:OHC_to_Signature}
	\[
	\OHC_{H,\psi}(v) = \sum_{S \in \IS(\psi)} |\{\phi \in \Phi(H,G) : S=\Signature(\phi,\psi,v)\}|
	\]
\end{claim}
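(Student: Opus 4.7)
The plan is to show that the right-hand side partitions exactly the set of homomorphisms counted by $\OHC_{H,\psi}(v)$, using the signature as the partitioning function.

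First, I would unfold the definition: $\OHC_{H,\psi}(v)$ counts those $\phi \in \Phi(H,G)$ such that there exists at least one $h \in \psi$ with $\phi(h)=v$, i.e., those $\phi$ for which $\Signature(\phi,\psi,v) \neq \emptyset$. So I can rewrite
\[
\OHC_{H,\psi}(v) = \bigl|\{\phi \in \Phi(H,G) : \Signature(\phi,\psi,v) \neq \emptyset\}\bigr|.
\]
Next, observe that each $\phi \in \Phi(H,G)$ determines a unique signature $\Signature(\phi,\psi,v) \subseteq \psi$, so the sets $\{\phi : \Signature(\phi,\psi,v) = S\}$ (as $S$ ranges over subsets of $\psi$) form a partition of $\Phi(H,G)$.

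The key step is to show that the only signatures $S$ that can arise for a homomorphism $\phi$ with $S \neq \emptyset$ are precisely the members of $\IS(\psi)$. One direction is immediate from \Clm{signature_IS}: if $S=\Signature(\phi,\psi,v)$ is non-empty, then $S$ is a non-empty independent set in $\psi$, hence $S \in \IS(\psi)$. The other direction is not needed for the claim, since sets $S \in \IS(\psi)$ which are not realized as a signature simply contribute zero to the sum on the right.

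Combining these observations, I can restrict the partition to non-empty signatures and reindex by $S \in \IS(\psi)$:
\begin{align*}
\OHC_{H,\psi}(v)
&= \sum_{S \subseteq \psi,\ S \neq \emptyset} |\{\phi \in \Phi(H,G) : \Signature(\phi,\psi,v) = S\}| \\
&= \sum_{S \in \IS(\psi)} |\{\phi \in \Phi(H,G) : S = \Signature(\phi,\psi,v)\}|,
\end{align*}
where the second equality drops the zero-contribution terms via \Clm{signature_IS}. This yields the claim. I do not expect a real obstacle here; the only subtle point is remembering that $\IS(\psi)$ already excludes the empty set (matching the non-emptiness condition coming from $\OHC$), so the bookkeeping is clean.
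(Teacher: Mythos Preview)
Your proposal is correct and follows essentially the same approach as the paper: both arguments unfold the definition of $\OHC_{H,\psi}(v)$ as the set of homomorphisms with non-empty signature, partition by the value of $\Signature(\phi,\psi,v)$, and invoke \Clm{signature_IS} to restrict the index set to $\IS(\psi)$. Your write-up is in fact a bit more explicit about the partition structure and about why the ``other direction'' is vacuous, but the content is the same.
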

\begin{proof}
	From the definition of Homomorphism Orbit Counts we have that $\OHC_{H,\psi}(v) = |\{\phi \in \Phi(H,G) \ : \exists h\in \psi,\ \ \phi(h)=v\}|$. Hence, suffices to show that $|\{\phi \in \Phi(H,G) \ : \exists h\in \psi,\ \ \phi(h)=v\}| = \sum_{S \in \IS(\psi)} |\{\phi \in \Phi(H,G) : S=\Signature(\phi,\psi,v)\}|$.
	
	Let $\phi \in \Phi(H,G)$ be a homomorphism from $H$ to $G$ such that $\exists h\in \psi, \phi(h)=v$. Let $S=\Signature(\phi,\psi,v)$, we know that $S \neq \emptyset$ as $h$ is mapped to $v$ and from Claim \ref{claim:signature_IS} we know that it forms an independent set on the vertices of $H$. Hence $S \in \IS(\psi)$. 
	
	To prove the other direction of the equality, suffices to note that if a homomorphism $\phi$ contributes to the right side of the equation, then its signature $S$ belongs to $\IS(\psi)$, hence there is at least one vertex $h \in V(H)$ that is mapped to $v$, and thus $\phi$ contributes to the left side of the equation.
\end{proof}

Now, we will relate the Signature with the Vertex-centric Counts:

\begin{claim} \label{claim:signature_group} 
	For each orbit $\psi$ in $H$ and each vertex $v$ in $V(G)$ we have that $\forall\ S\in \IS(\psi)$:
	\begin{equation*}
		|\phi \in \Phi(H,G) : \forall h\in S, \phi(h)=v| = \sum_{\substack{S' \supseteq S \\ S'\in \IS(\psi)}} |\phi : \Signature(\phi,\psi,v)=S'|
	\end{equation*}
\end{claim}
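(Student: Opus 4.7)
The plan is to prove the identity by showing that the set enumerated on the left partitions, according to the signature, into precisely the subsets enumerated on the right.

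First I would fix an arbitrary $\phi \in \Phi(H,G)$ satisfying $\phi(h)=v$ for all $h \in S$, and let $S' \eqdef \Signature(\phi,\psi,v)$. By the definition of signature, every $h \in S$ lies in $S'$, so $S \subseteq S'$. By \Clm{signature_IS}, $S'$ is an independent set in $H$, and since $S' \subseteq \psi$ and $S' \supseteq S \neq \emptyset$, we have $S' \in \IS(\psi)$. Thus every homomorphism counted on the left has a well-defined signature lying in $\{S' \in \IS(\psi) : S' \supseteq S\}$.

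Next I would write the set on the left as a disjoint union, grouping homomorphisms by their signature:
\begin{equation*}
\{\phi \in \Phi(H,G) : \forall h \in S,\ \phi(h)=v\} \;=\; \bigsqcup_{\substack{S' \in \IS(\psi) \\ S' \supseteq S}} \{\phi \in \Phi(H,G) : \Signature(\phi,\psi,v) = S'\}.
\end{equation*}
The disjointness is immediate because each $\phi$ has a single signature. The $\subseteq$ inclusion is exactly the previous paragraph. For the reverse inclusion, if $\Signature(\phi,\psi,v) = S'$ and $S' \supseteq S$, then for every $h \in S$ we have $h \in S'$, so $\phi(h) = v$ by the definition of signature, placing $\phi$ in the left-hand set.

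Taking cardinalities of both sides yields the claimed equality. The main obstacle here is essentially just bookkeeping: confirming that the index set on the right is precisely $\{S' \in \IS(\psi) : S' \supseteq S\}$ rather than something larger, which is guaranteed by \Clm{signature_IS} together with the fact that $S \subseteq S'$ forces $S'$ to be nonempty and contained in $\psi$.
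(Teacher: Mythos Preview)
Your proposal is correct and follows essentially the same approach as the paper: partition the left-hand set by signature, using \Clm{signature_IS} to ensure each signature lies in $\IS(\psi)$. Your version is simply more explicit about both inclusions and disjointness, where the paper compresses the argument into two sentences.
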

\begin{proof}
	If $\phi$ is mapping all the vertices in $S$ to $v$, then the Signature of $\phi$ from $\psi$ to $v$ must be a superset of $S$, $\Signature(\phi,\psi,v) \supseteq S$. Hence summing over such sets will reach the equality. Note that we can add the restriction of $S'$ belonging to $\IS(\psi)$ as it is implied from Claim \ref{claim:signature_IS}.
\end{proof}

Let $\Phi'=\Phi(\HS, G)$ be the set of homomorphism from $\HS$ to $G$. When $S$ forms an independent set there is an equivalence between the homomorphisms in $\Phi'$ that map $h_S$ to $v$ and the set of homomorphisms in $\Phi(H,G)$ that map all the vertices of $S$ to $v$. In fact we can prove the following claim:

\begin{claim} \label{claim: vhc_H/S} If $S$ is not empty and form an independent set:
	\begin{equation*}
		|\phi \in \Phi(H,G) :\ \forall\ h\in S\ \phi(h)=v| = \VHC_{\HS,h_S}(v)
	\end{equation*}
\end{claim}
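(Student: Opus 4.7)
The plan is to establish a bijection between the two sets in the claimed equality. On the left, we have homomorphisms $\phi : H \to G$ with $\phi(h) = v$ for every $h \in S$. On the right, $\VHC_{H_S, h_S}(v)$ counts homomorphisms $\phi' : H_S \to G$ with $\phi'(h_S) = v$. The independence hypothesis on $S$ is crucial because it guarantees that $H_S$ is a well-defined simple graph: since no edge of $H$ has both endpoints in $S$, merging $S$ into a single vertex $h_S$ does not introduce a self-loop.

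First I would define the forward map. Given $\phi$ from the left-hand side, define $\phi' : V(H_S) \to V(G)$ by $\phi'(h_S) = v$ and $\phi'(u) = \phi(u)$ for every $u \in V(H) \setminus S$. I need to check that $\phi'$ respects the edges of $H_S$. Every edge of $H_S$ either lies entirely in $V(H) \setminus S$, in which case it is also an edge of $H$ and is preserved because $\phi$ is a homomorphism, or it has the form $(h_S, u)$ for some $u \notin S$. In the second case, such an edge exists in $H_S$ precisely because some edge $(h, u)$ with $h \in S$ exists in $H$; since $\phi(h) = v$ and $\phi$ preserves $(h,u)$, we get $(v, \phi(u)) = (\phi'(h_S), \phi'(u)) \in E(G)$, as required.

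Next I would define the inverse map. Given $\phi'$ with $\phi'(h_S) = v$, define $\phi : V(H) \to V(G)$ by $\phi(h) = v$ for every $h \in S$ and $\phi(u) = \phi'(u)$ otherwise. The verification that $\phi$ is a homomorphism splits by the location of an edge $(a,b) \in E(H)$: if both endpoints are outside $S$, the edge is inherited from $H_S$; if exactly one endpoint, say $a$, is in $S$, the merged edge $(h_S, b)$ lies in $E(H_S)$, so $(\phi'(h_S), \phi'(b)) = (v, \phi(b)) = (\phi(a), \phi(b)) \in E(G)$; and the case of both endpoints in $S$ does not occur because $S$ is independent. The two constructions are clearly mutual inverses on the specified sets, and both preserve the condition about mapping $S$ (resp. $h_S$) to $v$, so they biject the two sets and the counts agree.

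I do not anticipate any real obstacle here; the entire content of the claim is the observation that the merging operation $H \mapsto H_S$ is ``homomorphism-preserving'' exactly when $S$ is independent, and the hypotheses of the claim are tailored so that this holds. The only thing to be careful about is the duplicate-edge removal in the definition of $H_S$, which affects membership in $E(H_S)$ but not the homomorphism conditions, since in both directions we only quantify over edges of the respective graph.
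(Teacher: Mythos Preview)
Your proposal is correct and follows essentially the same approach as the paper: both establish a bijection between $\{\phi \in \Phi(H,G) : \forall h \in S,\ \phi(h)=v\}$ and $\{\phi' \in \Phi(H_S,G) : \phi'(h_S)=v\}$ via the natural merge/unmerge maps. Your version is in fact more carefully written, as you explicitly verify edge preservation in each direction and point out where the independence of $S$ is used, whereas the paper's proof asserts these checks more tersely.
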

\begin{proof}
	From the definition of Vertex-centric Counts we have that $\VHC_{\HS,h_S}(v)= |\phi' \in \Phi(\HS,G): \phi(h_S)=v|$. Hence it suffices to show that:
	\begin{equation*}
		\left|\phi \in \Phi(H,G) : \forall h\in S\ \phi(h)=v\right| = \left|\phi' \in \Phi(\HS,G): \phi(h_S)=v\right|               
	\end{equation*}

	We do so by proving that there is a bijection between both sets, that is, a one to one correspondence between them. Let $\Phi_S = \{\phi' \in \Phi(\HS,G): \phi(h_S)=v\}$ and \break $\Phi'_S = \{\phi \in \Phi(H,G) :\forall h\in S,\phi(h)=v\}$. We show an invertible function $f: \Phi_S \to \Phi'_S$:
	\begin{itemize}
		\item Given a homomorphism $\phi \in \Phi_S$ we  obtain $\phi'=f(\phi) \in \Phi'_S$ by setting $\phi'(h)=\phi(h)\ \forall\ h\in H\setminus S$ and $\phi'(h_S)=v$. This is a valid homomorphism as we are mapping all the vertices in $\HS$ to $G$ and we are preserving the edges.
		
		\item Given a homomorphism $\phi' \in \Phi'_S$ we obtain $\phi=f'(\phi') \in \Phi_S$ by setting $\phi(h)=\phi'(h)\ \forall\ h\in H\setminus S$ and $\phi(h)=v\ \forall \ h \in S$. Again this is a valid homomorphism as we are mapping all the vertices in $H$ to $G$ and we are still preserving the edges. 
	\end{itemize}
	Additionally, we have that for all $\phi \in \Phi_S$, $\phi = f'(f(\phi))$, which completes the proof.
\end{proof}

We will show one last claim that will be important when deriving the inclusion-exclusion formula:
\begin{claim} \label{claim:sums}
	Given a graph $H$, for every orbit $\psi \in \Psi(H)$, any subset $S'\in \IS(\psi)$ satisfies:
	\begin{align*}
		\sum_{\substack{S \subseteq S' \\ S\neq \emptyset}}  (-1)^{|S|+1} = 1
	\end{align*}
\end{claim}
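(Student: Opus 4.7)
The plan is to recognize this as a standard binomial cancellation identity, and the only subtlety is making sure the non-emptiness of $S'$ is actually used. Since $S' \in \IS(\psi)$, by \Def{IS} we know that $S'$ is a non-empty subset of $\psi$. Let $k = |S'| \geq 1$. The statement has nothing to do with $H$ being a graph or with $S'$ being independent; it is purely a combinatorial fact about non-empty finite sets.

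First, I would group the subsets $S \subseteq S'$ by their cardinality $j = |S|$, noting that there are exactly $\binom{k}{j}$ such subsets for each $0 \le j \le k$. Then I would apply the binomial theorem to $(1-1)^k$:
\begin{equation*}
\sum_{S \subseteq S'} (-1)^{|S|} \;=\; \sum_{j=0}^{k} \binom{k}{j} (-1)^j \;=\; (1-1)^k \;=\; 0,
\end{equation*}
where the last equality uses $k \geq 1$.

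Next, I would isolate the $S = \emptyset$ term (which contributes $(-1)^0 = 1$) and move everything else to the other side:
\begin{equation*}
\sum_{\substack{S \subseteq S' \\ S \neq \emptyset}} (-1)^{|S|} \;=\; -1.
\end{equation*}
Multiplying both sides by $-1$ gives exactly
\begin{equation*}
\sum_{\substack{S \subseteq S' \\ S \neq \emptyset}} (-1)^{|S|+1} \;=\; 1,
\end{equation*}
which is the claim. There is no real obstacle here; the only point to be careful about is that the identity $(1-1)^k = 0$ requires $k \geq 1$, which is guaranteed by $S' \in \IS(\psi)$ being non-empty. This claim will then plug into the proof of \Lem{vhc_to_ohc} as the inclusion-exclusion coefficient that, for any given signature $S' = \Signature(\phi,\psi,v)$, sums to $1$ over all non-empty $S \subseteq S'$, so that each contributing homomorphism is counted exactly once.
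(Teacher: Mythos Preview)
Your proof is correct and essentially takes the same approach as the paper: both group subsets by cardinality to reduce to a binomial-coefficient identity. The only minor difference is that you invoke the binomial theorem $(1-1)^k=0$ and then peel off the empty-set term, whereas the paper applies Pascal's rule $\binom{|S'|}{i}=\binom{|S'|-1}{i-1}+\binom{|S'|-1}{i}$ and telescopes; your route is arguably a touch cleaner, but the two arguments are equivalent in spirit and difficulty.
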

\begin{proof}
	\begin{align*}
		&\sum_{\substack{S \subseteq S' \\ S\neq \emptyset}}  (-1)^{|S|+1} 
		= \sum_{i=1}^{|S'|} \binom{|S'|}{i}(-1)^{i+1} 
		\\
		&= \sum_{i=1}^{|S'|} \left(\binom{|S'|-1}{i-1} + \binom{|S'|-1}{i}\right)(-1)^{i+1} 
		= \binom{|S'-1|}{0}(-1)^2 
		= 1
	\end{align*}
\end{proof}

We now have all the tools required to prove Lemma \ref{lemma:vhc_to_ohc}:

\begin{proof}[Proof of Lemma \ref{lemma:vhc_to_ohc}]
	\begin{align*}
		&\sum_{S \in \IS(\psi)} (-1)^{|S|+1} \VHC_{\HS,h_S}(v)
		\\
		&=\sum_{S \in \IS(\psi)} (-1)^{|S|+1} |\phi \in \Phi(H,G) :\ \forall\ h\in S\ \phi(u)=v| && \text{(Claim \ref{claim: vhc_H/S})}
		\\
		& = \sum_{S \in \IS(\psi)} (-1)^{|S|+1} \sum_{\substack{S' \supseteq S \\ S'\in \IS(\psi)}} |\phi : \Signature(\phi,\psi,v)=S'| && \text{(Claim \ref{claim:signature_group})}
		\\
		& = \sum_{S \in \IS(\psi)} \sum_{\substack{S' \supseteq S \\ S'\in \IS(\psi)}}  (-1)^{|S|+1} |\phi : \Signature(\phi,\psi,v)=S'| && \text{(Factor in)}
		\\
		& = \sum_{S' \in \IS(\psi)} \sum_{\substack{S \subseteq S' \\ S\neq \emptyset}}  (-1)^{|S|+1} |\phi : \Signature(\phi,\psi,v)=S'| && \text{(Reorder)}
		\\
		&  = \sum_{S' \in \IS(\psi)} |\phi : \Signature(\phi,\psi,v)=S'| \sum_{\substack{S \subseteq S' \\ S\neq \emptyset}}  (-1)^{|S|+1} && \text{(Factor out)}
		\\
		& =  \sum_{S' \in \IS(\psi)} |\phi : \Signature(\phi,\psi,v)=S'| && \text{(Claim \ref{claim:sums})}
		\\
		& = \OHC_{H,\psi}(v) && \text{(Claim \ref*{claim:OHC_to_Signature})}
	\end{align*}
\end{proof}

The next lemma relates the Longest Induced Path Connecting Orbits (\LIPCO{}) defined in \Def{lipo} with the $LICL$ of all the graphs $\HS$,
for all $S \in \IS(\psi)$ and all orbits $\psi$ of $H$.

\begin{lemma} \label{lemma:lipco_to_licl}
	For every graph $H$, $\LIPCO(H) \leq 5$ iff $\ \forall \psi \in \Psi(H), \forall S \in \IS(\psi)$, \linebreak $LICL(\HS)\leq 5$.
\end{lemma}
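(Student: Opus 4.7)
The plan is to prove both directions of the biconditional by exploiting a structural correspondence between induced paths in $H$ whose endpoints lie in the same orbit and induced cycles in the contraction $\HS$. The key observation is that contracting an independent set $S \subseteq \psi$ to a single vertex $h_S$ neither creates edges between non-members of $S$ nor destroys any existing edges, so chord-freeness can be tracked carefully in both directions as long as one keeps separate bookkeeping for potential chords incident to $h_S$.

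For the ``if'' direction I will argue the contrapositive. Assume $\LIPCO(H) \geq 6$, witnessed by an induced path $P = h_0 h_1 \cdots h_k$ of length $k \geq 6$ in $H$ with $h_0, h_k \in \psi$. If $h_0 = h_k$, then $P$ is already an induced cycle in $H$; taking $S = \{h_0\}$ gives $\HS = H$, so $LICL(\HS) \geq k$. Otherwise $h_0$ and $h_k$ are non-adjacent (they would otherwise form a chord of $P$), so $S = \{h_0, h_k\}$ lies in $\IS(\psi)$. Contracting $S$ turns $P$ into a cycle of length $k$ through $h_S$ in $\HS$; any chord of this cycle would either be a chord of $P$ in $H$ (ruled out by induced-ness) or an edge from $h_0$ or $h_k$ to an internal vertex $h_i$ (also ruled out by induced-ness of $P$), so $LICL(\HS) \geq k$.

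For the ``only if'' direction I will again contrapose. Fix an induced cycle $C$ of length $k \geq 6$ in some $\HS$. If $h_S \notin V(C)$, then $C$ is literally a cycle of $H$ avoiding all of $S$, and since contraction does not affect adjacencies among non-$S$ vertices, $C$ is also an induced cycle of $H$, giving $\LIPCO(H) \geq k$. Otherwise $C = h_S\, u_1\, u_2 \cdots u_{k-1}\, h_S$; the induced-ness of $C$ in $\HS$ forces every vertex of $S$ to be non-adjacent in $H$ to each of $u_2, \ldots, u_{k-2}$, while the edges $h_S u_1$ and $h_S u_{k-1}$ come from some (possibly equal) $s_1, s_2 \in S$ with $s_1 u_1, s_2 u_{k-1} \in E(H)$. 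There are two subcases: either some single $s \in S$ is adjacent in $H$ to both $u_1$ and $u_{k-1}$, in which case $s\, u_1 \cdots u_{k-1}\, s$ is an induced cycle of length $k$ in $H$ passing through a vertex of $\psi$; or no such $s$ exists and I can pick $s_1 \neq s_2$ with $s_1$ non-adjacent to $u_{k-1}$ and $s_2$ non-adjacent to $u_1$, yielding an induced path $s_1\, u_1 \cdots u_{k-1}\, s_2$ of length $k$ in $H$ between two vertices of $\psi$, where the crucial non-edge $s_1 s_2$ is supplied by $S$ being independent.

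The main obstacle is the ``only if'' direction, specifically the uncontraction step: after collapsing $S$, information about which $s \in S$ supplied each edge incident to $h_S$ is lost, so witnesses must be chosen carefully. The key insight is a clean dichotomy --- either one $s$ witnesses both endpoint edges (producing an induced cycle in $H$) or two distinct ones do (producing an induced path in $H$) --- and in each case every potential chord of the uncontracted walk must be ruled out individually using either the induced-ness of $C$ in $\HS$ or the independence of $S$ in $H$.
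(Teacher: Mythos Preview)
Your proof is correct and follows essentially the same contrapositive-plus-case-analysis approach as the paper's own proof. In fact, your treatment of the uncontraction step in the ``only if'' direction is more careful: the paper simply asserts that splitting $h_S$ yields ``two distinct vertices $h,h' \in S$'' at the ends of an induced path, whereas you correctly separate out the subcase where a single $s \in S$ is adjacent to both neighbours of $h_S$ on the cycle (giving an induced cycle in $H$ rather than a path between distinct vertices), and you explicitly verify all the needed non-adjacencies.
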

\begin{proof}
	First, we show that if $\LIPCO(H) > 5$ then $\exists \psi \in \Psi(H), \exists S \in \IS(\psi),  LICL(\HS) > 5$. Consider the longest induced path in $H$ with endpoints in the same orbit $\psi \in \Psi(H)$, let $h,h'$ be the two endpoints of the path. We have two cases:
	\begin{itemize}
		\item $h=h'$: In this case the induced path is actually just an induced cycle of length $6$ or more in $H$ including the vertex $h$. For any $\psi$ and for any $S \subseteq \psi$ with $|S|=1$ we have that $\HS=H$, and hence $LICL(\HS)>5$.
		
		\item $h\neq h'$: In the other case we have that $h,h'$ are distinct vertices. Consider the set $S=\{h,h'\}$, we have that $S \in \IS(\psi)$ as both $h,h' \in \psi$ and there is no edge connecting them (otherwise we would have a longer induced cycle). We form $\HS$ by combining $h$ and $h'$ into a single vertex, the induced path that we had in $H$ becomes then an induced cycle of length at least $6$, which implies $LICL(\HS)>5$.
	\end{itemize} 
	
	Now, we prove that if $\exists \psi \in \Psi(H), \exists S \in \IS(\psi), LICL(\HS) > 5$ then $\LIPCO(H) > 5$. Let $S$ be the set such that $LICL(\HS)>5$. Again, we have two cases:
	\begin{itemize}
		\item $|S|= 1$: We have that $\HS=H$ and hence $LICL(H)>5$, any vertex in that induced cycle induces a path of the same length with such vertex in both ends, which implies $\LIPCO(H) > 5$. 
		
		\item $|S|> 1$: Let $h_S$ be the vertex in $\HS$ obtained by merging the vertices of $S$ in $H$. Consider the longest induced cycle in $\HS$, if that cycle does not contain $h_S$ then that same cycle exists in $H$ and $LICL(H)>5$, which implies $\LIPCO(H) > 5$. Otherwise, we can obtain $H$ by splitting $h_S$ back into separate vertices, there will be two distinct vertices $h,h' \in S$ that are in the two ends of an induced path of the same length in $H$, thus $\LIPCO(H) > 5$.
	\end{itemize}
\end{proof}

\section{Wrapping it up} \label{sec:wrapup}

In this section we complete the proof of the main theorem for the upper bound. We also give Algorithm \ref{alg:exclusion_inclusion}, which summarizes the entire process.

\begin{theorem} \label{thm:ohc_upper_2}
	There is an algorithm that, given a bounded degeneracy graph $G$ and pattern $H$ with $LIPCO(H) \leq 5$, computes $\OHC_{H}(G)$ in time $O(n\log{n})$.
\end{theorem}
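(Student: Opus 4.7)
The plan is to combine the inclusion-exclusion identity from \Lem{vhc_to_ohc} with the vertex-centric counting algorithm of \Thm{vhc_upper_2}, using \Lem{lipco_to_licl} as the bridge that guarantees the latter algorithm applies to every auxiliary pattern that appears in the sum.

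First, I would enumerate, as a preprocessing step that depends only on $H$ (hence runs in $O(1)$ time with respect to $n$), the set of orbits $\Psi(H)$ and, for each orbit $\psi$, the family $\IS(\psi)$ of non-empty independent subsets of $\psi$. For each $S \in \IS(\psi)$ I would construct the merged pattern $\HS$ together with its distinguished merged vertex $h_S$. The number of triples $(\psi, S, \HS)$ produced is bounded by a function of $|H|$ alone, so it is a constant.

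Next, I would invoke \Lem{lipco_to_licl}: since $\LIPCO(H) \leq 5$ by hypothesis, every one of the patterns $\HS$ constructed above satisfies $LICL(\HS) \leq 5$. This is exactly the precondition of \Thm{vhc_upper_2}, so I can run the vertex-centric counting algorithm ($\ComputeVHC$) on each $\HS$ with input graph $G$ and obtain, in $O(n\log n)$ time per call, the full list $\VHC_{\HS}(G)$. Since there are only constantly many $\HS$, the total time to collect all vertex-centric counts needed on the right-hand side of \Lem{vhc_to_ohc} is still $O(n\log n)$.

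Finally, for each orbit $\psi \in \Psi(H)$ and each $v \in V(G)$, I would compute
\begin{equation*}
\OHC_{H,\psi}(v) \;=\; \sum_{S \in \IS(\psi)} (-1)^{|S|+1}\, \VHC_{\HS,h_S}(v)
\end{equation*}
by a direct summation, using \Lem{vhc_to_ohc} for correctness. This aggregation touches each of the $n|\Psi(H)|$ output entries a constant number of times, so it contributes only $O(n)$ additional work. Summing across all orbits and all vertices, the overall running time is $O(n\log n)$, which yields the theorem. There is no real obstacle beyond bookkeeping: the heavy lifting is already packaged inside \Thm{vhc_upper_2} (which itself rests on Bressan's machinery), and \Lem{lipco_to_licl} guarantees we never hand that algorithm a pattern it cannot handle. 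The only care needed is to verify that all patterns $\HS$ remain simple, connected graphs on at most $|V(H)|$ vertices, so that the constants hidden in the $O(n\log n)$ bound depend only on $H$ and on the degeneracy of $G$.
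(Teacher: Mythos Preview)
Your proposal is correct and follows essentially the same approach as the paper: invoke \Lem{lipco_to_licl} to ensure every auxiliary pattern $\HS$ has $LICL(\HS)\le 5$, apply \Thm{vhc_upper_2} to each of the constantly many $\HS$, and then assemble the orbit counts via the inclusion-exclusion identity of \Lem{vhc_to_ohc}. The paper's proof is terser but structurally identical; your additional remarks about preprocessing and verifying that each $\HS$ is simple and connected are sound bookkeeping details.
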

\begin{proof}
	Because we have that $\LIPCO(H) \leq 5$, using Lemma \ref{lemma:lipco_to_licl} we get that $\forall \psi \in \Psi(H), \forall S \in \IS(\psi), LICL(\HS) \leq 5$. This means, using Theorem \ref{thm:vhc_upper_2}, that $\forall \psi \in \Psi(H), \forall S \in \IS(\psi)$ we can compute $\VHC_{\HS}(G)$ in time $f(k)O(n \log{n})$.
	
	Using Lemma \ref{lemma:vhc_to_ohc} we can compute $\OHC_{H}(G)$ from the individual counts of \linebreak $\VHC_{\HS}(G)$ (as shown in Algorithm \ref{alg:exclusion_inclusion}), we have at most $2^k$ sets $S$, hence the total time complexity necessary to compute $\OHC_{H}(G)$ is $O(n \log{n})$.
\end{proof}

\begin{algorithm}
	\caption{\textbf{Homomorphism Orbit Counts }$\OHC_{H}(G)$}\label{alg:exclusion_inclusion}
	\begin{algorithmic}[1]
		\For{each $\psi \in \Psi(H)$} 
		\For{$S \in \IS(\psi)$}
		\State Compute $\VHC_{\HS,h_S}(G)$
		\EndFor
		\State $\OHC_{H,\psi}(G) = \sum_{S \in IS(\psi)} (-1)^{|S|+1} \VHC_{\HS,h_S}(v)$
		\EndFor
		\State Return $\OHC_{H}(G)$
	\end{algorithmic}
\end{algorithm}

\section{Lower Bound for computing Homomorphism Orbit Counts}\label{sec:lower_bound}

In this section we prove the lower bound of Theorem \ref{thm:main}. It will be given by the following theorem:

\begin{theorem} \label{thm:ohc_lower}
	Let $H$ be a pattern graph on $k$ vertices with  $\LIPCO(H) > 5$. Assuming the Triangle Detection Conjecture, there exists an absolute constant $\gamma>0$ such that for any function $f : \mathbb{N} \times \mathbb{N} \rightarrow \mathbb{N}$, there is no (expected) $f(\kappa,k)O(m^{1+\gamma})$ algorithm for the $\OHC$ problem, where $m$ and $\kappa$ are the number of edges and the degeneracy of the input graph, respectively.
\end{theorem}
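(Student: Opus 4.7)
The plan is to exhibit a pattern $H^*$ with $LICL(H^*)\ge 6$ such that any algorithm for $\OHC_H$ can be used to compute $\Hom{G}{H^*}$, and then invoke the existing hardness of the latter problem.

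First, we build $H^*$ from the $\LIPCO$ witness. Since $\LIPCO(H)>5$, \Def{lipo} provides vertices $h,h'$ in the same orbit $\psi$ of $H$ with an induced simple path $\pi$ of length at least $6$ between them (possibly $h=h'$, in which case $\pi$ is an induced cycle). If $h=h'$, set $H^*:=H$; then $LICL(H^*)\ge 6$ already. Otherwise the endpoints of $\pi$ are non-adjacent (any chord would shorten $\pi$), so $S:=\{h,h'\}\in\IS(\psi)$; set $H^*:=H_S$. Merging $h$ with $h'$ closes $\pi$ into an induced cycle of length at least $6$ in $H^*$, so again $LICL(H^*)\ge 6$. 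This is essentially the forward direction of Lemma~\ref{lemma:lipco_to_licl}.

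Second, we invoke existing lower bounds for bounded-degeneracy homomorphism counting. The lower-bound direction of the dichotomy of \cite{BePaSe21,BeGiLe+22,GiLeSh20} states that, under \Conj{triangle}, no algorithm computes $\Hom{G}{H^*}$ in expected time $f(\kappa,k)\,m^{1+\gamma}$ on bounded-degeneracy inputs, for any function $f$ and some absolute $\gamma>0$.

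Third, we reduce $\Hom{G}{H^*}$ to $\OHC_H$ through the Curticapean-Dell-Marx graph-motif framework \cite{CuDeMa17} and its fine-grained variant by Gishboliner-Levanzov-Shapira \cite{GiLeSh20}. By Lemma~\ref{lemma:vhc_to_ohc}, $\OHC_{H,\psi}(v)$ is a signed sum of $\VHC_{H_S,h_S}(v)$ over $S\in\IS(\psi)$; summing over $v\in V(G)$ yields a signed integer combination of the counts $\Hom{G}{H_S}$ in which $\Hom{G}{H^*}$ appears (coming from $S=\{h,h'\}$, or from the $|S|=1$ terms in the cycle case where every $H_S=H=H^*$). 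The remaining summands involve patterns obtained by merging other independent subsets of $\psi$. To isolate $\Hom{G}{H^*}$ from these, we apply the standard CDM/GLS strategy: evaluate $\OHC_H$ on a constant number of linear-sized, bounded-degeneracy auxiliary graphs $G_1,\ldots,G_t$ obtained from $G$ via vertex-coloring and vertex-duplication gadgets, and solve the resulting invertible linear system for $\Hom{G}{H^*}$.

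The main obstacle will be making Step 3 rigorous. Concretely we must verify (i) that the $G_i$ can be chosen with $|V(G_i)|=O(|V(G)|)$ and degeneracy $O(\kappa)$, so that a hypothetical near-linear $\OHC_H$ algorithm applied to each $G_i$ still runs in near-linear time in $|V(G)|$; and (ii) that in the resulting linear combination the coefficient of $\Hom{G}{H^*}$ is nonzero and computable, so that $\Hom{G}{H^*}$ can actually be recovered (i.e., the hard term is not cancelled by the other $H_S$-contributions of the inclusion-exclusion of Lemma~\ref{lemma:vhc_to_ohc}). Once these are established, a hypothetical $f(\kappa,k)\,m^{1+\gamma}$-time algorithm for $\OHC_H$ would yield the same bound for $\Hom{G}{H^*}$, contradicting Step 2 and proving \Thm{ohc_lower}.
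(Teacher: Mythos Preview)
Your proposal is correct and follows essentially the same route as the paper: sum $\OHC_{H,\psi}(v)$ over $v$ to get a finite linear combination of the counts $\Hom{G}{H_S}$ (the paper calls this quantity $\Agg(H,G,\psi)$), then use the CDM/GLS inversion tool (stated in the paper as Lemma~4.2 of \cite{BeGiLe+22}) to recover each $\Hom{G}{H_S}$ from evaluations on a constant number of linear-sized, degeneracy-preserving graphs $G_1,\ldots,G_l$, and finally invoke the $LICL\ge 6$ hardness of \cite{BePaSe21} together with Lemma~\ref{lemma:lipco_to_licl}.

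Regarding the obstacle you flag in~(ii): the paper dispatches it in one line by grouping the $H_S$ up to isomorphism and observing that $|V(H_S)|=k-|S|+1$, so any two sets $S,S'$ with $H_S\cong H_{S'}$ must satisfy $|S|=|S'|$; hence the signs $(-1)^{|S|+1}$ agree and the coefficient of each isomorphism class is a nonzero signed multiplicity. Your obstacle~(i) is exactly what Lemma~4.2 of \cite{BeGiLe+22} guarantees, so both concerns are resolved precisely as you anticipate.
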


To prove this Theorem we will show how to express the Homomorphism Orbit Counts for some orbit $\psi$ as a linear combination of Homomorphism counts of non-isomorphic graphs $\HS$ for all $S$ in $\IS(\psi)$. Because $\LIPCO(H) > 5$ we will have that the $LICL$ of at least one of these graphs is also greater than $5$. We will then show that the hardness of computing Orbit counts in the original graph is the same than the hardness of computing the Homomorphisms counts. Finally we use a previous hardness result from \cite{BePaSe21} to complete the proof.

First, we introduce the following definition:

\begin{definition} \label{def:agg}
	Given a pattern graph $H$ and an input graph $G$, for the orbit $\psi$ of $H$, we define $\Agg(H,G,\psi)$ as the sum over every vertex $v \in V(G)$ of homomorphisms that are mapping some vertex in $\psi$ to $v$, that is:
	\[
	\Agg(H,G,\psi)=\sum_{v\in V(G)} \OHC_{H,\psi}(v)
	\]
\end{definition}

Note that if we can compute $\OHC_{H,\psi}(v)$ for every vertex $v$ in $G$ then we can also compute $\Agg(H,G,\psi)$ in additional linear time. Now, we state the following lemma:

\begin{lemma} \label{lemma:aggregating}
	For every pattern graph $H$ and every orbit $\psi \in \Psi(H)$, there is some number $l = l(H)$ such that the following holds. For every graph $G$ there are some graphs $G_1, ... , G_l$, computable in time $O(|V (G)| + |E(G)|)$, such that $|V(G_i)| = O(|V|)$ and $|E(G_i)| = O(|E|)$ for all $i = 1, ..., l$, and such that knowing $\Agg(H, G_1,\psi), ... , \Agg(H, G_l,\psi)$ allows one to compute $\Hom{G}{\HS}$ for all $S\in \IS(\psi)$, in time $O(1)$. Furthermore, if $G$ is $O(1)$-degenerate, then so are $G_1,...,G_l$.
\end{lemma}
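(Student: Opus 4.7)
My plan is to first collapse the per-vertex identity of \Lem{vhc_to_ohc} into a single global equation, and then apply the standard Curticapean--Dell--Marx interpolation paradigm (in its linear-time, bounded-degeneracy form used by Gishboliner--Levanzov--Shapira and Bera--Pashanasangi--Seshadhri) to invert this single equation into a full system over $l$ auxiliary graphs. Summing \Lem{vhc_to_ohc} over $v \in V(G)$ and using $\sum_v \VHC_{\HS,h_S}(v) = \Hom{G}{\HS}$ (each $\HS$-homomorphism sends $h_S$ to exactly one vertex of $G$) yields the master identity
\[
\Agg(H,G,\psi) \;=\; \sum_{S \in \IS(\psi)} (-1)^{|S|+1}\, \Hom{G}{\HS},
\]
so $\Agg(H,G,\psi)$ is already a fixed $\pm 1$-linear combination of precisely the homomorphism counts we must recover.

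With only this one equation and $l \eqdef |\IS(\psi)|$ unknowns the recovery is under-determined, so the second step is to build $l$ auxiliary graphs $G_1,\ldots,G_l$ --- each a linear-size bounded-degeneracy modification of $G$ computable in $O(|V(G)|+|E(G)|)$ time --- such that applying the master identity to each $G_i$ gives an $l \times l$ linear system in $\{\Hom{G}{\HS}\}_{S\in\IS(\psi)}$ with an $H$-dependent invertible integer coefficient matrix. The natural choice is a family of twin-blow-up / decoration gadgets: replacing each vertex $v$ of $G$ by an independent set of twins sharing $v$'s neighborhood (and optionally attaching a small constant-size marker graph) gives the clean identity $\Hom{G_i}{F}$ as an explicit integer polynomial in the gadget parameters whose coefficients are homomorphism counts of $G$ into quotients of $F$. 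Since $l$ and the coefficient matrix depend only on $H$, a one-time linear-algebraic inversion recovers every $\Hom{G}{\HS}$ in $O(1)$ time; independent twin blow-ups clearly preserve linear size and raise the degeneracy by at most an $H$-dependent multiplicative factor.

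The main obstacle I anticipate is certifying invertibility of this coefficient matrix, because a one-parameter twin blow-up assigns every $\HS$ of the same vertex count (i.e., every $S \in \IS(\psi)$ of equal cardinality) the same polynomial weight and so yields a singular system. I plan to resolve this either by introducing several independent gadget parameters so that non-isomorphic $\HS$'s receive distinct monomials (a multivariate Vandermonde argument then establishes non-vanishing), or by invoking Lov\'asz's theorem on the linear independence of homomorphism counts into pairwise non-isomorphic targets, which certifies that a sufficiently rich generic family of $l$ modifications of $G$ produces a nonsingular system. Either route yields the required constant $l = l(H)$ of auxiliary graphs within all stated size, degeneracy, and runtime bounds, completing the reduction.
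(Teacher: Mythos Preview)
Your proposal is correct and follows essentially the same route as the paper: sum the inclusion--exclusion identity over $v\in V(G)$ to obtain $\Agg(H,G,\psi) = \sum_{S\in\IS(\psi)} (-1)^{|S|+1}\Hom{G}{\HS}$, then invert this linear form via the Curticapean--Dell--Marx interpolation machinery on auxiliary graphs $G_1,\ldots,G_l$. The paper's only streamlining over your plan is to first collapse isomorphic $\HS$'s into a single term (observing that $H_S\cong H_{S'}$ forces $|S|=|S'|$, so the grouped coefficient is nonzero) and then invoke Lemma~4.2 of \cite{BeGiLe+22} as a black box rather than rebuilding the gadget construction---this sidesteps your anticipated invertibility obstacle directly, since that lemma already packages exactly the Lov\'asz linear-independence argument you describe.
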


First, we can  relate the Homomorphism Vertex Counts of a vertex $h \in V(H)$ to Homomorphism Counts from $H$ to $G$, as given in the following claim:

\begin{claim} \label{claim:vhc_sum} For all $h\in V(H)$:
	\begin{equation*}
		\sum_{v \in V(G)} \VHC_{H,h}(v) = \Hom{G}{H}
	\end{equation*}
\end{claim}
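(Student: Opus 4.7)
The plan is to prove this identity by a straightforward partitioning (double counting) argument. The key observation is that each homomorphism $\phi \in \Phi(H,G)$ maps the fixed vertex $h$ to a \emph{unique} vertex of $G$, so partitioning $\Phi(H,G)$ according to the value $\phi(h)$ gives precisely the vertex-centric buckets counted by $\VHC_{H,h}(\cdot)$.

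Concretely, I would first unpack the definition: for each $v \in V(G)$, let $\Phi_{h,v} \eqdef \{\phi \in \Phi(H,G) : \phi(h) = v\}$, so that by Definition~\ref{def:vhc} one has $\VHC_{H,h}(v) = |\Phi_{h,v}|$. Since every $\phi \in \Phi(H,G)$ satisfies $\phi(h) \in V(G)$ and the value $\phi(h)$ is uniquely determined by $\phi$, the family $\{\Phi_{h,v}\}_{v \in V(G)}$ is a disjoint partition of $\Phi(H,G)$. Summing cardinalities then yields $\sum_{v \in V(G)} \VHC_{H,h}(v) = \sum_{v \in V(G)} |\Phi_{h,v}| = |\Phi(H,G)| = \Hom{G}{H}$, which is exactly the identity claimed.

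There is no real obstacle here --- this is essentially a textbook double counting argument, and the identity is independent of the chosen $h$ precisely because both sides are. The reason the claim is worth stating is its downstream role: combined with \Lem{vhc_to_ohc}, it will let one aggregate the per-vertex values $\OHC_{H,\psi}(v)$ into $\Agg(H,G,\psi)$ and re-express the latter as a linear combination (via inclusion-exclusion over $S \in \IS(\psi)$) of global homomorphism counts $\Hom{G}{\HS}$. That representation is what feeds into the Curticapean--Dell--Marx style hardness transfer used in \Lem{aggregating} and ultimately \Thm{ohc_lower}.
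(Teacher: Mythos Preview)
Your proposal is correct and takes essentially the same approach as the paper: both unpack the definition of $\VHC_{H,h}(v)$ and observe that the sets $\{\phi \in \Phi(H,G): \phi(h)=v\}$ partition $\Phi(H,G)$ as $v$ ranges over $V(G)$, so summing cardinalities yields $|\Phi(H,G)| = \Hom{G}{H}$. Your explicit naming of the partition $\{\Phi_{h,v}\}_{v}$ and the remark on the claim's downstream role in \Lem{aggregating} are fine additions but not substantive differences.
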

\begin{proof}
	\begin{align*}
		&\sum_{v \in V(G)} \VHC_{H,h}(v)
		\\
		&=\sum_{v \in V(G)} |\{  \phi \in \Phi(H,G): \phi(h) = v \}| && \text{(Def. of $\VHC$)}
		\\
		&=\left|\{  \phi \in \Phi(H,G): \phi(h) \in V(G) \}\right| && \text{(Sum over whole set)}
		\\
		&=\left|\Phi(H,G)\right| && \text{($\forall \phi: \phi(u)\in V(G)$)}
		\\
		&= \Hom{G}{H} && \text{(Def. of Hom)}
	\end{align*}
\end{proof}

We now state the following Lemma from \cite{BeGiLe+22}:

\begin{lemma}  \label{lemma:bera_4_2}
	(Lemma 4.2 from \cite{BeGiLe+22}): Let $H_1, ... , H_l$ be pairwise non-isomorphic graphs and let $c_1, ... , c_l$ be non-zero constants. For every graph $G$ there are graphs $G_1, ... , G_l$, computable in time $O(|V (G)|+|E(G)|)$, such that $|V (G_i)| =
	O(|V (G)|)$ and $|E(G_i)| = O(|E(G)|)$ for every $i = 1, ... , l$, and such that knowing $b_j := c_1 \cdot \Hom{G_j}{H_1} +...+ c_l \cdot \Hom{G_j}{H_l}$ for every $j = 1, ... , l$ allows one to compute $\Hom{G}{H_1}, ... , \Hom{G}{H_l}$ in time $O(1)$. Furthermore, if $G$ is $O(1)$-degenerate, then so are $G_1, ... , G_l$.
\end{lemma}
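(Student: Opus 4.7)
The plan is to express $\Agg(H,G,\psi)$ as an explicit integer linear combination of homomorphism counts $\Hom{G}{\HS}$ over $S\in \IS(\psi)$, and then invoke Lemma \ref{lemma:bera_4_2} as a black box to invert this linear combination via auxiliary graphs $G_1,\ldots,G_l$. The derivation of the key identity would go: unfold the definition of $\Agg$, apply the inclusion--exclusion formula of Lemma \ref{lemma:vhc_to_ohc} pointwise, swap the two summations, and finally use Claim \ref{claim:vhc_sum} to collapse the inner per-vertex sum $\sum_v \VHC_{\HS,h_S}(v)$ into a single homomorphism count $\Hom{G}{\HS}$. This yields
\begin{equation*}
\Agg(H,G,\psi) \;=\; \sum_{S \in \IS(\psi)} (-1)^{|S|+1}\, \Hom{G}{\HS},
\end{equation*}
and, importantly, this identity holds for every input graph, not just the original $G$, which is what lets it be composed with Lemma \ref{lemma:bera_4_2}.

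Next, I would group the right-hand side by the isomorphism type of $\HS$: let $H'_1,\dots,H'_l$ be the distinct isomorphism classes appearing as $\HS$ for some $S\in\IS(\psi)$, with aggregated coefficient $c_j = \sum_{S:\, \HS \cong H'_j} (-1)^{|S|+1}$, so that $\Agg(H,G,\psi) = \sum_{j=1}^l c_j \Hom{G}{H'_j}$. To feed this into Lemma \ref{lemma:bera_4_2} I must certify that each $c_j$ is non-zero. The clean way to see this is the vertex-count observation $|V(\HS)| = |V(H)| - |S| + 1$: sets $S, S'\in\IS(\psi)$ with $|S|\neq|S'|$ produce merged graphs of different vertex counts and hence cannot be isomorphic. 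Consequently, every isomorphism class collects only sets $S$ of a single common size, so $c_j = \pm(\text{number of such } S)$, which is a non-zero integer.

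To finish, I would invoke Lemma \ref{lemma:bera_4_2} on the list $(H'_1,\dots,H'_l)$ with constants $(c_1,\dots,c_l)$. For each input $G$ it produces graphs $G_1,\dots,G_l$ in time $O(|V(G)|+|E(G)|)$, each of linear size and preserving $O(1)$-degeneracy. The quantity $b_j := \sum_{i=1}^l c_i \Hom{G_j}{H'_i}$ that Lemma \ref{lemma:bera_4_2} requires as input coincides with $\Agg(H,G_j,\psi)$, again by the key identity applied to $G_j$. Therefore, given $\Agg(H,G_1,\psi),\dots,\Agg(H,G_l,\psi)$, Lemma \ref{lemma:bera_4_2} recovers $\Hom{G}{H'_j}$ for every $j$ in $O(1)$ additional time; since every $\HS$ with $S\in\IS(\psi)$ is isomorphic to some $H'_j$, we in particular obtain $\Hom{G}{\HS}$ for all $S\in\IS(\psi)$, and the size/degeneracy guarantees carry over from Lemma \ref{lemma:bera_4_2}.

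The main obstacle I expect is exactly the non-vanishing of the grouped coefficients $c_j$: the sign-alternating inclusion--exclusion formula a priori leaves open the possibility of cancellations when several non-isomorphic sets $S$ of possibly different sizes yield the same $\HS$. The vertex-count argument above is the crucial observation that rules this out, after which the remainder of the proof is a routine packaging of the identity into the black-box machinery of Lemma \ref{lemma:bera_4_2}.
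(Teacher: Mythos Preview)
Your proposal does not prove the stated Lemma~\ref{lemma:bera_4_2}; it proves Lemma~\ref{lemma:aggregating} instead. You explicitly invoke Lemma~\ref{lemma:bera_4_2} ``as a black box'' in the final step, which is circular if the target is Lemma~\ref{lemma:bera_4_2} itself. In the paper, Lemma~\ref{lemma:bera_4_2} is not proved at all: it is quoted verbatim from \cite{BeGiLe+22} and used without proof. A genuine proof of Lemma~\ref{lemma:bera_4_2} would have to construct the auxiliary graphs $G_1,\dots,G_l$ (typically via tensor/categorical products $G\times K_j$ or similar blow-ups) and show that the resulting $l\times l$ system in the unknowns $\Hom{G}{H_i}$ is invertible, appealing to the linear independence of homomorphism numbers for pairwise non-isomorphic patterns. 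None of that appears in your write-up.

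If, on the other hand, the intended target was Lemma~\ref{lemma:aggregating}, then your argument is correct and matches the paper's proof essentially line for line: unfold $\Agg$, apply Lemma~\ref{lemma:vhc_to_ohc}, swap sums, collapse with Claim~\ref{claim:vhc_sum}, group by isomorphism type, and feed into Lemma~\ref{lemma:bera_4_2}. Your vertex-count justification $|V(\HS)|=|V(H)|-|S|+1$ for why each grouped coefficient $c_j$ is non-zero is exactly the reasoning behind the paper's terse ``Note that all such sets have equal $|S|$''; you have simply made it explicit.
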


We will apply the previous lemma in a similar way as it is used the proof of Lemma $4.1$ in \cite{BeGiLe+22}.

\begin{proof} [Proof of Lemma \ref{lemma:aggregating}]
	Let $H_1,...,H_l$ be an enumeration of all the graphs $\HS$ for all $S\in \IS(\psi)$, up to isomorphism. This means that $H_1,...,H_l$ are pairwise non-isomorphic and $\{H_1, ... , H_l\} = \{\HS : S \in \IS(\psi)\}$. 
	
	Let $f(i) =  (-1)^{|S|+1}|\{S \in \IS(\psi) : \HS \text{ is isomorphic to } H_i\}|$ be the number of sets $S \in \IS(\psi)$ such that $H_S$ is isomorphic to $H_i$, with the sign being $(-1)^{|S|+1}$. Note that all such sets have equal $|S|$ and that the value of $f(i)$ is always non-zero. We will use $h_i$ to denote the vertex of $H_i$ that correspond to the vertices $h_S$ of the graphs $\HS$ that are isomorphic to $H_i$. We can express $Agg(H,G,\psi)$ as follows:.
	
	\begin{align*}
		&Agg(H,G,\psi) = \sum_{v\in V(G)} \OHC_{H,\psi}(v) && \text{(Def. \ref{def:agg})}\\
		&= \sum_{v\in V(G)}\sum_{S \in \IS(\psi)} (-1)^{|S|+1} \VHC_{\HS, h_S}(v) && \text{(Lemma \ref{lemma:vhc_to_ohc})}\\
		&= \sum_{v\in V(G)}\sum_{i=1}^l f(i) \VHC_{H_i, h_i}(v)&& \text{(Def. of $f(i)$)}\\
		&= \sum_{i=1}^l f(i)\sum_{v\in V(G)} \VHC_{H_i, h_i}(v)&& \text{(Reorder)}\\
		&= \sum_{i=1}^l f(i) \Hom{G}{H_i} && \text{(Claim \ref{claim:vhc_sum})}
	\end{align*}
	Hence, we have that $\Agg(H,G,\psi)$ is a linear combination of homomorphism counts of $H_1,...,H_l$. We can then use Lemma \ref{lemma:bera_4_2} to complete the proof.
\end{proof}

Before we prove Theorem \ref{thm:ohc_lower}, we need to state the following theorem from \cite{BePaSe21}, which gives a hardness result on Homomorphism Counting:

\begin{theorem} [Theorem 5.1 from \cite{BePaSe21}]\label{thm:bera_lower}
	Let $H$ be a pattern graph on $k$ vertices with $LICL \geq 6$. Assuming the Triangle Detection Conjecture, there exists an absolute constant $\gamma$ such that for any function $f : \mathbb{N} \times \mathbb{N} \rightarrow \mathbb{N}$, there is no (expected) $f(\kappa,k)O(m^{1+\gamma})$ algorithm for the $\HomAl{H}$ problem, where $m$ and $\kappa$ are the number of edges and the degeneracy of the input graph, respectively.
\end{theorem}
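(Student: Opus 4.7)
The plan is a two-stage reduction from triangle detection to $\HomAl{H}$. Since $LICL(H)\geq 6$, I fix an induced cycle $C_\ell\subseteq H$ of length $\ell\geq 6$, write $V_C\subseteq V(H)$ for its vertex set, and let $R=H[V(H)\setminus V_C]$ be the remainder. I will first show that computing $\Hom{\cdot}{C_\ell}$ on bounded-degeneracy inputs is hard via a classical subdivision trick, and then lift this hardness to $\HomAl{H}$ using the Curticapean--Dell--Marx-style inversion packaged in \Lem{bera_4_2}.

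For Stage~1, given a triangle-detection instance $G_t$ with $m$ edges, I build $G^*$ by replacing each edge of $G_t$ with an internally disjoint path of $\lfloor\ell/3\rfloor$ vertices (with small adjustments when $3\nmid\ell$ so that the triangle--cycle correspondence survives). Any subdivided graph has degeneracy at most $2$, so $G^*$ is bounded degeneracy with $|E(G^*)|=O(m)$. Each triangle of $G_t$ contributes $\Theta(1)$ ``canonical'' $C_\ell$-homomorphisms into $G^*$, while the ``degenerate'' $C_\ell$-homomorphisms (ones that collapse cycle vertices, reuse internal path vertices, or traverse non-triangular short walks) can be enumerated exactly in $O(m\log n)$ time from degree sequences and bounded-length path counts of $G^*$. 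Subtracting the degenerate contribution from $\Hom{G^*}{C_\ell}$ recovers the triangle count of $G_t$, so an $f(\kappa)O(m^{1+\gamma})$ algorithm for $\HomAl{C_\ell}$ on bounded-degeneracy inputs would detect triangles in $G_t$ within $O(m^{1+\gamma})$, contradicting \Conj{triangle}.

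For Stage~2, I construct a constant number of bounded-degeneracy graphs $G_1,\ldots,G_s$, each of size $O(m)$, by gluing to $G^*$ a fixed constant-size gadget that encodes $R$ and its attachments to $V_C$ in $s$ distinct ways. Each attachment forces any $H$-homomorphism of $G_j$ to send $V_C$ into $V(G^*)$ and $V(H)\setminus V_C$ into the gadget, producing an identity of the form
\begin{equation*}
\Hom{G_j}{H} \;=\; \sum_{H'\in\mathcal{Q}} \alpha_{j,H'}\cdot \Hom{G^*}{H'},
\end{equation*}
where $\mathcal{Q}$ is the finite family of quotients of $H$ obtained by merging independent sets of $V_C$ (in the spirit of the $H_S$ construction of \Sec{orbit-dag}), and where $C_\ell\in\mathcal{Q}$. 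Choosing $s=|\mathcal{Q}|$ attachment patterns so that the matrix $(\alpha_{j,H'})_{j,H'}$ is invertible lets me apply \Lem{bera_4_2} to extract $\Hom{G^*}{C_\ell}$ in $O(1)$ additional time from the $s$ values $\Hom{G_1}{H},\ldots,\Hom{G_s}{H}$. Composing Stage~2 with Stage~1 yields the claimed lower bound.

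The main obstacle is the Stage~2 gadget design: the attachments must simultaneously preserve bounded degeneracy and produce an invertible coefficient matrix across the quotient family $\mathcal{Q}$. Invertibility hinges on $C_\ell$ being an \emph{induced} subgraph of $H$, which prevents spurious edges on $V_C$ from conflating distinct quotients, together with varying the attachment so that different $H'\in\mathcal{Q}$ contribute linearly independent columns. This is essentially a local Möbius-style inversion over the quotient lattice of $H$, and is exactly the structural hypothesis that \Lem{bera_4_2} is designed to exploit.
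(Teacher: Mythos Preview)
This theorem is not proved in the present paper---it is quoted from \cite{BePaSe21} and invoked as a black box in the proof of \Thm{ohc_lower}---so there is no in-paper argument to compare against. That said, your Stage~1 (subdividing edges to reduce triangle detection to $C_\ell$-homomorphism counting on degeneracy-$2$ graphs) is standard and essentially correct.

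Stage~2, however, has a genuine gap. Gluing a constant-size gadget $\Gamma_j$ to $G^*$ does not ``force any $H$-homomorphism of $G_j$ to send $V_C$ into $V(G^*)$'': a homomorphism $\phi\colon H\to G_j$ may land entirely in $G^*$, entirely in $\Gamma_j$, or mix arbitrarily, so the claimed identity $\Hom{G_j}{H}=\sum_{H'\in\mathcal{Q}}\alpha_{j,H'}\Hom{G^*}{H'}$ is neither derived nor of the shape such a partition would actually produce (nothing about a local gadget singles out quotients obtained by merging independent sets \emph{inside} $V_C$; the analogy with the $H_S$ construction of \Sec{orbit-dag} does not transfer). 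Your appeal to \Lem{bera_4_2} is also misplaced: that lemma inverts a single \emph{fixed} linear combination $\sum_i c_i\,\Hom{\cdot}{H_i}$ by evaluating it on several host graphs built from $G$, whereas you posit coefficients $\alpha_{j,H'}$ that vary with $j$; if such a system existed and were invertible you would simply solve it by linear algebra, and \Lem{bera_4_2} would play no role. The reduction one actually needs is global rather than a local gadget: build a $|V(H)|$-partite host with one copy of $V(G)$ for each cycle vertex $c_i$ and a singleton for each vertex of $R$, wired so that colour-respecting $H$-homomorphisms are in bijection with $C_\ell$-homomorphisms in $G$; then remove the colour constraint by inclusion--exclusion over the $2^{|V(H)|}$ colour-class-induced subgraphs, each of which still has bounded degeneracy and $O(m)$ edges. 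This is the mechanism underlying the cited result, and it cannot be replaced by attaching a constant-size piece to a single copy of $G^*$.
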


We now have all the tools required to proof Theorem \ref{thm:ohc_lower}:

\begin{proof}[Proof of Theorem \ref{thm:ohc_lower}]
	We prove by contradiction. Given a graph $G$ and a pattern $H$ with $\LIPCO(H) > 5$, suppose there exists an algorithm that allows us to compute $\OHC_{H}(G)$ in time $f(\kappa,k)O(m)$, by Lemma \ref{lemma:aggregating} we have the existence of some graphs $G_1,...,G_l$. We can compute $\OHC_{H}(G_i)$ for all of these graphs in time $f(\kappa,k)O(m)$ and then aggregate the results into $\Agg(H,G_i,\psi)$ for all $G_i$ and all $\psi \in \Psi(H)$. Using Lemma \ref{lemma:aggregating}, that implies that we can compute $\Hom{G}{\HS}$ for all $S \in \IS(\psi)$ for all $\psi \in \Psi(H)$ in time $f(\kappa,k)O(m)$. 
	
	However, if $\LIPCO{}(H)> 5$ then, by Lemma \ref{lemma:lipco_to_licl}, we have that there exists a $S\subseteq \psi$ for some $\psi \in \Psi(H)$ such that $LICL(\HS)> 5$. From Theorem \ref{thm:bera_lower} we know that in that case there is no algorithm that computes $\Hom{G}{\HS}$ in time $f(\kappa,k)O(m^{1+\gamma})$ for some constant $\gamma>0$. This is a contradiction, and hence no algorithm can compute $\OHC_{H}(G)$ in $f(\kappa,k)O(m)$ time. 
\end{proof}

\bibliography{cleanbib}

\end{document}